%% file: main.tex
\documentclass[sigconf]{acmart}
\AtBeginDocument{%
  }

\setcopyright{acmlicensed}
\copyrightyear{2018}
\acmYear{2018}
\acmDOI{XXXXXXX.XXXXXXX}
\acmConference[Conference acronym 'XX]{Make sure to enter the correct
  conference title from your rights confirmation email}{June 03--05,
  2018}{Woodstock, NY}
\acmISBN{978-1-4503-XXXX-X/2018/06}

\usepackage[normalem]{ulem}
\usepackage{tikz}
\usepackage{enumitem}

\usepackage{stmaryrd}

\usepackage{tcolorbox}
\tcbuselibrary{skins,breakable}

\newtcolorbox{mybox}[2][]{%
  enhanced,
  title        = {#2},
  attach boxed title to top left={xshift=+3mm,yshift*=-3mm},
  colback      = white,
  colframe     = black,
  fonttitle    = \bfseries,
  fontupper    = \small,
  fontlower    = \small,
  colbacktitle = black!3!white,
  coltitle     = black,
  #1
}

\newcommand{\dom}[1]{\mathbb{#1}}

\def\out[#1]{\dom{Y}_{#1}}
\def\Aldp[#1]{\mathcal{R}_{#1}}

\newtheorem{theorem}{Theorem}[section]

\newtheorem{lemma}[theorem]{Lemma}

\newtheorem{definition}[theorem]{Definition}

\usepackage{pifont}

\usepackage[vlined, linesnumbered, ruled, nofillcomment]{algorithm2e}
\DontPrintSemicolon

\SetKwComment{tcp}{// }{}

\SetKwInOut{KwIn}{Input}
\SetKwInOut{KwOut}{Output}

\SetKwFunction{CF}{CF}
\SetKwFunction{CFInv}{CF^{-1}}

\usepackage{stmaryrd}
\usepackage{multirow}

\usepackage[dvipsnames, svgnames, x11names]{xcolor}

\newcommand{\rescale}{\text{\color{DarkGreen}Rescale}}
\newcommand{\expansion}{\text{\color{Tomato}Expansion}}

\newcommand{\Merge}{\text{\color{Blue}Merge}}

\newcommand{\revise}[1]{\begingroup #1 \endgroup}

\begin{document}

\title{Multi-tier Differential Private Query Release}


\author{Shaowei Wang, Jinn Li,\\ Yun Peng}
\affiliation{%
  \institution{Guangzhou University}
  \city{Guangzhou}
  \country{China}}
\email{wangsw@gzhu.edu.cn}

\author{Puning Zhao, Wenqi Ren}
\affiliation{%
  \institution{Sun Yat-sen University}
  \city{Shenzhen}
  \country{China}
}

\author{Changyu Dong, Jin Li,\\ Jian Weng}
\affiliation{%
 \institution{Guangzhou University}
 \city{Guangzhou}
 \country{China}}

\renewcommand{\shortauthors}{Wang et al.}

\begin{abstract}
\revise{Answering statistical queries over sensitive data under differential privacy (DP) is a common task in many settings, including databases, mobile computing, and data markets. In these scenarios, multiple analysts may issue the same query, while receiving answers generated under different privacy budgets due to differences in trust levels or willingness to pay. Existing approaches for such multi-tier DP queries either incur excessive cumulative privacy loss or suffer from suboptimal utility. In this paper, we propose a framework for multi-tier DP query release that simultaneously bound the cumulative privacy loss by the maximum privacy budget among all queries and achieve optimal utility comparable to that of single-tier mechanisms. Our framework applies to different classes of DP mechanisms. For noise-adding mechanisms (e.g., count queries with the two-sided Geometric mechanism in the curator model), we develop a general solution based on the characteristic functions of noise distributions. For other mechanisms (e.g., count queries under the local DP model with the Subset mechanism), we design mechanism-specific primitives for budget transformation and introduce a template-based strategy that attains optimal utility across different privacy regimes. Experimental results demonstrate the effectiveness of our framework.}
\end{abstract}

\begin{CCSXML}
<ccs2012>
   <concept>
       <concept_id>10002978.10003029.10011150</concept_id>
       <concept_desc>Security and privacy~Privacy protections</concept_desc>
       <concept_significance>500</concept_significance>
       </concept>
   <concept>
       <concept_id>10002978.10002991.10002995</concept_id>
       <concept_desc>Security and privacy~Privacy-preserving protocols</concept_desc>
       <concept_significance>500</concept_significance>
       </concept>
 </ccs2012>
\end{CCSXML}

\ccsdesc[500]{Security and privacy~Privacy protections}
\ccsdesc[500]{Security and privacy~Privacy-preserving protocols}


\keywords{differential privacy, local differential privacy}


\maketitle

\input{body}


\bibliographystyle{ACM-Reference-Format}
\bibliography{refs}

\appendix
\input{appendix}

\end{document}

%% file: body.tex
\section{Introduction}\label{sec:intro}
Collecting and analyzing user data benefits a wide range of domains, from healthcare services to mobile computing. However, because user data in these domains often contains sensitive personal information, preserving data privacy is essential. Differential privacy (DP) \cite{dwork2006differential} has emerged as a de facto standard for privacy protection and is widely adopted in real-world industrial systems. For example, Google \cite{zhang2024differentially} and LinkedIn \cite{rogers2021linkedin} use DP for analytical queries in data warehouses, while Apple \cite{appleDP}, Google \cite{GoogleDP}, and Microsoft \cite{ding2017collecting} apply DP on the user side to collect mobile service usage data.

In many DP systems, multiple parties may issue the same query, often with varying privacy budgets and utility targets. For example:

\begin{itemize}[leftmargin=1em]
\item \emph{Multi-analyst data warehouses.} Different analysts within a data warehouse may query the same statistics, such as a user population histogram, with varying levels of trust and DP budgets. For example, LinkedIn's Audience Engagement API platform \cite{rogers2021linkedin} reports that internal analysts (such as for dashboards, anomaly detection, and A/B testing) and external analysts (e.g., those asking profile statistics) are allocated different privacy budgets.

    
\item \emph{Multi-target mobile services.} Mobile devices often run multiple Apps and services concurrently, many of which rely on user demographic information and usage data, albeit with varying levels of accuracy (e.g., with different local DP noises). For example, Google uses DP-protected keyboard typing data for both next-word prediction modeling \cite{GoogleDP2} and out-of-vocabulary (OOV) word discovery \cite{GoogleDP}; personalized services (e.g., Apple Maps) typically require relatively accurate user information to deliver high-quality experiences, whereas aggregate analytics services (e.g., Apple and Google's epidemic exposure notification~\cite{applegoogle}) may operate effectively with only approximate information.

\item \emph{Multi-buyer data markets.} In data markets \cite{azcoitia2022survey,alabi2025privacy}, data owners provide access to sensitive datasets to multiple buyers with different monetary budgets and accuracy requirements. Different buyers may purchase the same data asset (e.g., individual records or precomputed statistics) at varying levels of precision (e.g., controlled by DP noise) depending on their willingness to pay. This practice is known as data versioning \cite{yu2017data} in data markets.
\end{itemize}

\begin{figure*}[t]
\centering
\includegraphics[width=0.88\linewidth]{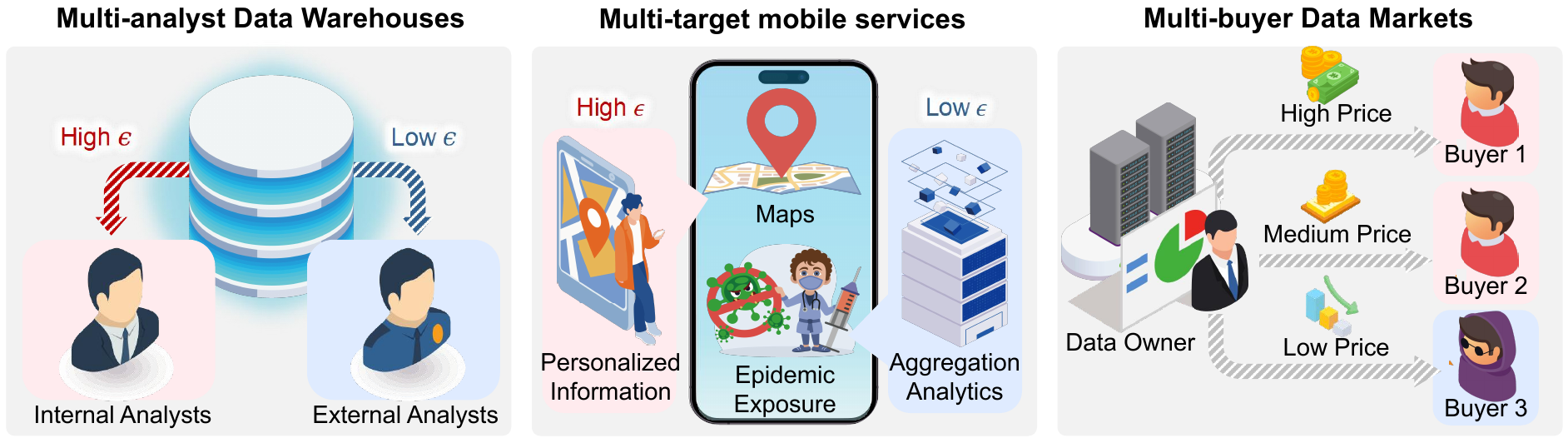}
\vspace*{-1em}
\caption{Application scenarios of multi-tier differentially private queries.}
\label{fig:scenario}
\end{figure*}

A straightforward approach to handling these multi-tier DP queries is to independently invoke a base DP mechanism for each recipient $i\in [m]$ with  privacy budget $\epsilon_i$, to obtain the result $r_i$. We call this \emph{independent release} approach. However, it leads to substantial cumulative privacy loss for the data owner (typically $\sum_{i\in [m]}\epsilon_i$), as the recipients may collude. Specifically, multiple analysts in a data warehouse or multiple buyers in a data market could share and fuse their results to obtain a more accurate version that surpasses each individual privacy level; multiple mobile services (e.g., those from the same parent company) could ensemble several sanitized user reports to achieve an unexpected precision. 

Another approach is to release the responses gradually by invoking a base DP mechanism for each recipient $i\in [m]$ with privacy budget $\epsilon_i-\epsilon_{i+1}$ to obtain $r'_{i}$, assuming that party $i+1$ has a lower budget than party $i$ (i.e. $\epsilon_{i+1} < \epsilon_{i}$) and already receives $r_{i+1}$. The result $r_i$ for the $i$-th recipient can then be obtained by combining $r'_{i}$ and $r_{i+1}$ (e.g., via weighted average). As a consequence, the cumulative privacy loss would be $\epsilon_1$ (i.e., the largest budget among all recipients). We call this the \emph{gradual release} approach. However, combining two or more DP results with split budgets often results in suboptimal utility. For example, averaging $m$ results, each of which is $(\epsilon/m)$-DP with the Laplace mechanism \cite{dwork2006differential} under query sensitivity $1$, yields a variance of ${2m}/{\epsilon^2}$, whereas directly invoking an $\epsilon$-DP Laplace mechanism gives a variance of ${2}/{\epsilon^2}$.

Ideally, solutions for multi-tier DP queries should simultaneously achieve the following two objectives:

 
\begin{itemize}
[leftmargin=1em]
\item\textbf{Privacy objective.} From the data owner's  perspective, the cumulative privacy loss should be bounded by $\max_{i\in [m]} \epsilon_i$. More generally, for any colluding set of parties $S\subseteq [m]$, the joint output $\{r_{i}\}_{i\in S}$ should satisfy $(\max_{i\in S} \epsilon_i)$-DP.
\item\textbf{Utility objective.} From the recipients' perspective, the utility of each received result $r_i$ should (closely) match the optimal/best utility achievable under $\epsilon_i$-DP. 
\end{itemize}

Clearly, the independent release approach fails to achieve the desired privacy goal, whereas the gradual release approach satisfies the privacy objective but often incurs substantial utility loss. Several other attempts have been made to address multi-tier DP queries. Specifically, for DP queries with Gaussian mechanism, because Gaussian noises are closed under linear operations, one can gradually add Gaussian noises to transform a low-privacy result into a high-privacy result \cite{zhang2023dprovdb}. For queries with Laplace mechanism, the work of \cite{koufogiannis2016gradual} also designed a gradual noise addition rule. In the local DP setting, the randomized response mechanism \cite{warner1965randomized,kairouz2016discrete} has simple re-sampling rules to transform a low-privacy result into a high-privacy result \cite{xiao2009optimal}. Nevertheless, these classical base DP mechanisms are known to be suboptimal. For example, in the low-privacy regime, the Laplace mechanism exhibits an exponential error gap compared to the two-sided Geometric mechanism \cite{ghosh2012universally}; meanwhile, the binary randomized response \cite{warner1965randomized} and the general randomized response \cite{kairouz2016discrete}  perform significantly worse than the Subset mechanism \cite{wang2019local} in the low-privacy and high-privacy regimes, respectively. In multi-recipient scenarios, the target privacy levels $\{\epsilon_i\}_{i\in [m]}$ can span from low regimes (e.g., $\epsilon > 1$) to high regimes (e.g., $\epsilon < 1$), making these approaches inherently incapable of achieving the utility objective. Designing DP mechanisms for multi-tier queries that achieve optimal utility remains a fundamental challenge:
\begin{itemize}
[leftmargin=1em]
\item Some state-of-the-art DP noise-adding mechanisms, such as the two-sided Geometric mechanism \cite{ghosh2012universally} and the discrete Gaussian mechanism \cite{canonne2020discrete}, operate in the non-smooth discrete domain (to improve utility and avoid floating-point attacks \cite{mironov2012significance}). \revise{This makes the existence of outputting rules (e.g., via Markov gradual noise adding \cite{koufogiannis2016gradual} for low-to-high privacy transformation) that satisfy the privacy objective of the joint output $\{r_{i}\}_{i\in S}$ difficult to verify, let alone the concrete design of  transformation strategies.}
\item Other DP mechanisms that do not rely on explicit noise addition, such as the Subset mechanism \cite{wang2019local} (used in Google Gboard's OOV discovery \cite{GoogleDP}) and local hash mechanism \cite{wang2017locally}, employ a general probabilistic mapping between the input and output domains, where the output domain differs from the input domain and varies with the privacy budget $\epsilon$. This variation renders privacy adaptation considerably difficult.
\end{itemize}

In this work, we propose new approaches to achieve both goals for the above \emph{multi-tier differentially private query release} problem. Concretely, for noise-adding DP mechanisms (e.g., the two-sided Geometric mechanism), we propose a characteristic function-based framework to check the existence of the privacy transformation and to seamlessly transform a low-privacy result into a high-privacy result (if such a transformation exists), and derive concrete residual noises for the transformation. For other DP mechanisms (e.g., the Subset mechanism in the local DP), we propose a template-based method, which first generates several template results using reference privacy budgets with mechanism-specific operators, and then finds the closest template for each target budget $\epsilon_i$. Through theoretical analyses, we ensure that these methods provide exactly/approximately optimal results for all possible budgets on count queries in the curator and local DP settings, respectively.

The contributions of this work are as follows:
\begin{itemize}[leftmargin=2em,topsep=2pt,itemsep=1pt]
    \item \emph{Multi-tier noise-adding DP framework.} For noise-adding DP mechanisms, we propose a framework that transforms a low-privacy output into a stricter one by adding extra noise, determined by the characteristic function of the noise distributions. We instantiate it for optimal count query mechanisms. Additionally, we demonstrate non-existence of such transformations for the discrete Gaussian mechanism.
    \item \emph{Multi-tier mapping-based DP Method.} For mapping-based DP mechanisms (e.g., the Subset/local hash mechanism), we present a template-based method that generates collusion-resilient tiered templates. We prove the  optimality of each tier's result for the multi-tier Subset mechanism.
    \item \emph{Experimental evaluations. } Through extensive simulations, we demonstrate the effectiveness of our proposed multi-tier frameworks and concrete mechanisms, show significant error reduction compared to existing approaches.
\end{itemize}

The remainder of this paper is organized as follows. Section \ref{sec:pre} provides preliminaries. Section \ref{sec:formulation} formalizes the multi-tier DP query problem. Section \ref{sec:noise} presents the multi-tier framework for noise-adding mechanisms, and Section \ref{sec:transition} develops multi-tier LDP mapping-based mechanisms. Section \ref{sec:exp} presents evaluation results.  Section \ref{sec:related} summarizes related works. Section \ref{sec:conclusion} concludes the paper.


\section{Preliminaries}\label{sec:pre}
\noindent\textbf{Notations. } We use $[c]$ to denote $\{1,2,\ldots,c\}$, and $[c_1,c_2]$ to denote $\{c_1,c_1+1,\ldots,c_2\}$.

\subsection{Characteristic Function}\label{subsec:char-func}
For a random variable $X$, the characteristic function $\Phi_X: \mathbb{R} \mapsto \mathbb{C}$ is defined as:
\[\Phi_X(t) = \mathbb{E}[e^{j t X}],\]
where $j$ is the imaginary unit and $t$ is the parameter of the characteristic function. It captures all moments of the distribution in a single complex-valued function.

The key properties of characteristic functions are as follows:
\begin{itemize}[leftmargin=2em,topsep=2pt,itemsep=1pt]
    \item \emph{Existence property}: The characteristic function $\Phi_X(t)$ exists for every valid probability distribution of $X$.
    \item \emph{Bijection property}: The mapping from a probability distribution to its characteristic function is bijective. A valid characteristic function $\Phi_X$ uniquely determines the distribution of $X$.
    \item \emph{Summation property}: For independent variables $X$ and $Y$,
    \[\Phi_{X+Y}(t) = \Phi_X(t) \cdot \Phi_Y(t).\]
    This extends to finite sums of independent random variables: $\Phi_{\sum_{i=1}^n X_i}(t) = \prod_{i=1}^n \Phi_{X_i}(t)$.
    \item \emph{Mixture property}: For a mixture distribution where $X$ takes values from a family of variables $X_i$ with probabilities $p_i$,
    \[\Phi_X(t) = \sum_i p_i \cdot \Phi_{X_i}(t).\]
\end{itemize}
There exist established criteria to check whether a function is a valid characteristic function and how to numerically derive the probability distribution from a characteristic function \cite{bochner2005harmonic}. These properties of characteristic functions together will facilitate the design of the multi-tier DP noise-adding framework.






\subsection{Differential Privacy}\label{subsec:dp}
\begin{definition}[Hockey-Stick Divergence]\label{def:hs-div}
For distributions $P, Q$ over domain $\mathcal{F}$ and $\alpha \geq 1$, the \emph{$\alpha$-Hockey-Stick divergence} between $P$ and $Q$ is:
$$H_\alpha(P \| Q) = \sum_{v \in \mathcal{F}} \left[ P(v) - \alpha Q(v) \right]_+,$$ where $[x]_+ = \max(x, 0)$.
\end{definition}

Differential privacy (DP) guarantees that the output of an algorithm does not reveal too much information about any individual:
\begin{definition}[$(\epsilon, \delta)$-DP]\label{def:dp}
A mechanism $\mathcal{M}: \mathcal{X}^n \to \mathcal{Y}$ satisfies $(\epsilon, \delta)$-differential privacy if, for any adjacent datasets $X,X'\in \mathcal{X}^n$ that differ by at most one record, the following holds:
\[H_{e^\epsilon}(\mathcal{M}(X) \| \mathcal{M}(X')) \leq \delta.\]
This is equivalent to saying that for any $S \subseteq \mathcal{Y}$:
\[\Pr[\mathcal{M}(X) \in S] \leq e^{\epsilon} \Pr[\mathcal{M}(X') \in S] + \delta.\]
\end{definition}
Typically, $\delta$ is restricted to a small value (e.g., $\delta < O(1/n)$). The parameter $\epsilon$ determines the strength of the privacy guarantee: smaller values mean stronger privacy. The special case $(\epsilon,0)$-DP is often referred to as  $\epsilon$-DP.

Similarly, when each data owner applies privacy mechanisms locally, the definition of the local model of DP is as follows: 
\begin{definition}[$\epsilon$-LDP]\label{def:ldp}
A mechanism $\mathcal{M}: \mathcal{X} \to \mathcal{Y}$ satisfies $\epsilon$-local differential privacy iff for all $x, x' \in \mathcal{X}$:
\[
H_{e^\epsilon}(\mathcal{M}(x) \| \mathcal{M}(x'))=0.
\]
\end{definition}

\textbf{Data processing inequality. } For $f$-divergences, including the Hockey-stick divergence, a desirable property is data processing inequality. It formalizes the fundamental intuition that data processing can only reduce the distinguishability between distributions.
\begin{lemma}[Data Processing Inequality~\cite{polyanskiy2022IT}]\label{thm:dpi}
For any Markov kernel $K$ and $f$-divergence $D$:
\[D(P \| Q) \geq D(K(P) \| K(Q)).\]
\end{lemma}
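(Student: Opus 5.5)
The plan is to prove the data processing inequality in the form used throughout the paper: for distributions $P,Q$ on a countable domain $\mathcal{F}$ and a Markov kernel $K$ with transition probabilities $K(w\mid v)\ge 0$, $\sum_w K(w\mid v)=1$, we show $D_f(P\|Q)\ge D_f(KP\|KQ)$. Here $D_f(P\|Q)=\sum_v Q(v) f(P(v)/Q(v))$ with $f$ convex and $f(1)=0$. The Hockey-Stick divergence $H_\alpha$ is the case $f(x)=[x-\alpha]_+$, which is convex. For the case we actually need later, $H_\alpha$, we will also give a direct two-line argument, since the general version only requires handling the usual conventions at $Q(v)=0$.

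\textbf{Direct argument for $H_\alpha$.} We write $(KP)(w)=\sum_v K(w\mid v)P(v)$, and similarly for $KQ$. Then
\[
(KP)(w)-\alpha (KQ)(w)=\sum_v K(w\mid v)\bigl(P(v)-\alpha Q(v)\bigr)\le \sum_v K(w\mid v)\bigl[P(v)-\alpha Q(v)\bigr]_+ .
\]
The right-hand side is nonnegative, so the same bound holds after applying $[\cdot]_+$ to the left-hand side. Summing over $w$ and exchanging the order of summation (all terms are nonnegative, so Tonelli applies) gives
\[
H_\alpha(KP\|KQ)\le \sum_v \Bigl(\sum_w K(w\mid v)\Bigr)\bigl[P(v)-\alpha Q(v)\bigr]_+ = H_\alpha(P\|Q),
\]
where the last step uses row-stochasticity of $K$.

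\textbf{General $f$-divergences.} We use Jensen's inequality through the joint/perspective formulation. For each $w$ with $(KQ)(w)>0$, define weights $\lambda_v = K(w\mid v)Q(v)/(KQ)(w)$, which sum to $1$. Then
\[
\frac{(KP)(w)}{(KQ)(w)}=\sum_v \lambda_v \frac{P(v)}{Q(v)},
\]
so convexity of $f$ gives
\[
(KQ)(w)\, f\!\left(\frac{(KP)(w)}{(KQ)(w)}\right)\le \sum_v K(w\mid v)\,Q(v)\, f\!\left(\frac{P(v)}{Q(v)}\right).
\]
Summing over $w$ and using $\sum_w K(w\mid v)=1$ yields the claim.

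\textbf{Main obstacle.} The only delicate step is the boundary case where $Q(v)=0$ or $(KQ)(w)=0$. We handle it with the standard convention $0\cdot f(a/0)=a\lim_{x\to\infty}f(x)/x$, i.e., the perspective function. This function is jointly convex and positively homogeneous, hence subadditive, and subadditivity replaces Jensen in this case. For uncountable domains, the same argument goes through with densities with respect to a dominating measure. In the Hockey-Stick case this obstacle disappears entirely, because the direct argument above never divides by $Q$.
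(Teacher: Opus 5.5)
The paper gives no proof of this lemma. It imports the result from Polyanskiy--Wu and uses it as a black box, chiefly to conclude that $H_{e^\epsilon}$ cannot increase under post-processing, so that advanced collusion resistance implies the basic one. In the proof of Theorem~\ref{the:noiseaddingprivacy}, data processing is simply assumed of the measure $D$.

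Your proposal is correct and gives a self-contained argument the paper omits. Your direct Hockey-Stick computation is the piece the paper actually relies on. It uses only $a\le[a]_+$, $[a]_+\ge 0$, and row-stochasticity of $K$, and it never touches the $Q(v)=0$ conventions.

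Your general argument is the standard textbook one: Jensen along the weights $\lambda_v$, or equivalently sublinearity of the perspective $g(p,q)=qf(p/q)$. The perspective argument on its own already handles every $w$, including $(KQ)(w)=0$, so your case split is optional. Countable subadditivity then follows from finite sublinearity plus lower semicontinuity of the closed perspective.

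Two small points would make it airtight.
\begin{itemize}
\item Before swapping $\sum_w$ and $\sum_v$, replace $f$ by $f(x)-c(x-1)\ge 0$ for a subgradient $c\in\partial f(1)$. This leaves $D_f$ unchanged, including the boundary term $P(Q=0)\lim_{x\to\infty}f(x)/x$, and makes Tonelli applicable.
\item The paper also runs its framework on continuous outputs (Laplace, Gaussian). Your remark about densities is best made precise as $D_f(KP\|KQ)\le D_f(P\otimes K\|Q\otimes K)=D_f(P\|Q)$. Here $P\otimes K$ is the joint law of $(X,Y)$ with $X\sim P$ and $Y\mid X\sim K(\cdot\mid X)$. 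The equality holds because both joint laws share the same kernel, and the inequality is conditional Jensen under marginalization.
\end{itemize}

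The citation buys this full measurable-space, all-$f$ statement for free. Your proof buys a transparent, convention-free argument in exactly the discrete Hockey-Stick case on which the paper's privacy claims rest.
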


\textbf{Base noise-adding mechanisms:}
DP is typically achieved by adding carefully calibrated noise to the output of a query. Here are some popular noise-adding mechanisms:
\begin{itemize}[leftmargin=2em, topsep=2pt,itemsep=0pt]
\item \emph{Laplace Mechanism:} Designed for functions $f: \mathcal{D} \to \mathbb{R}$ with sensitivity $\Delta$ (the maximum change in output when one record altered), this mechanism satisfies $\epsilon$-DP by adding noise $\eta$ following a Laplace distribution:
    \[
    \Pr[\eta = y] = \frac{\epsilon}{2\Delta} e^{-\epsilon |y| / \Delta}, \quad \forall y \in \mathbb{R}
    \]
\item \emph{Two-sided Geometric Mechanism~\cite{ghosh2012universally}:} Designed for integer-valued functions $f: \mathcal{X}^n \to \mathbb{Z}$ with sensitivity $\Delta$, this mechanism adds noise $\eta$ following  distribution:
    \[\Pr[\eta = k] = \frac{1 - e^{-\epsilon/\Delta}}{1 + e^{-\epsilon/\Delta}} e^{-\epsilon |k| / \Delta}, \quad \forall k \in \mathbb{Z}\]
    It satisfies $\epsilon$-DP and is universally optimal when $\Delta=1$~\cite{ghosh2012universally}.
\item \emph{Discrete Gaussian Mechanism~\cite{canonne2020discrete}:} Designed for integer-valued functions $f: \mathcal{X}^n \to \mathbb{Z}$, this mechanism adds noise $\eta$ drawn from the discrete Gaussian distribution $N_{\mathbb{Z}}(0,\sigma^2)$:
    \[
    \Pr[\eta = k] = \frac{e^{-k^2/(2\sigma^2)}}{\sum_{j\in\mathbb{Z}} e^{-j^2/(2\sigma^2)}}, \quad \forall k \in \mathbb{Z}.
    \]
    For any $\epsilon,\delta \in (0,1)$ with $\epsilon < 1$, use $\sigma^2 = 2\Delta^2 \log(2/\delta)/\epsilon^2$ ensures $(\epsilon,\delta)$-DP. This mechanism is useful crypto-assisted decentralized DP noise generation.
\item \emph{Skellam Mechanism~\cite{agarwal2021skellam}:} Designed for integer-valued functions $f: \mathcal{X}^n \to \mathbb{Z}$ with sensitivity $\Delta$, this mechanism adds noise $\eta$ following the Skellam distribution:
    \[
    \Pr[\eta = k] = e^{-2\lambda} I_{|k|}(2\lambda), \quad \forall k \in \mathbb{Z},
    \]
    where $I_\nu$ is the modified Bessel function of the first kind. Choosing $\lambda = \Delta/(e^{\epsilon} - 1)$ yields $\epsilon$-DP. The Skellam mechanism is particularly attractive for integer-valued queries in decentralized settings due to its infinite divisibility.

\end{itemize}

\textbf{Base mapping mechanisms:} Besides adding noises, DP can also be realized by explicitly defining a transition/mapping probability matrix between the input and output (e.g., randomized response \cite{warner1965randomized,kairouz2016discrete}, local hash \cite{wang2017locally} and Subset mechanism \cite{wang2019local}).


\begin{itemize}[topsep=2pt,itemsep=0pt]
    \item \emph{Generalized Randomized Response (GRR)}~\cite{kairouz2016discrete}: For categorical data $x \in [d]$ with $\epsilon$-LDP, the output distribution is:
    \[\Pr[y | x] = \begin{cases}
        \frac{e^\epsilon}{e^\epsilon + d - 1} & \text{if } y = x; \\
        \frac{1}{e^\epsilon + d - 1} & \text{if } y \neq x.
    \end{cases}\]

    \item \emph{Local Hash Mechanism}~\cite{wang2017locally}: It maps the domain $[d]$ to a domain $[g]$ using a hash function $H: [d] \to [g]$, then applies GRR over $[g]$. The output distribution is:
    \[\Pr[y | x] = \begin{cases}
        \frac{e^\epsilon}{e^\epsilon + g - 1} & \text{if } y = H(x); \\
        \frac{1}{e^\epsilon + g - 1} & \text{if } y \neq H(x).
    \end{cases}\]
    The hash domain size $g$ is typically set to $\lfloor e^\epsilon + 1\rfloor$.
    
    \item \emph{Subset Mechanism}~\cite{wang2016mutual,wang2019local}: For categorical data $x \in [d]$, it outputs a subset $S \subseteq [d]$ of size $k$, assigns probabilities to subsets based on whether they contain the true value $x$:
    \[\Pr[S | x] = \begin{cases}
        \frac{e^\epsilon}{\binom{d-1}{k-1}e^\epsilon + \binom{d-1}{k}} & \text{if } x \in S; \\
        \frac{1}{\binom{d-1}{k-1}e^\epsilon + \binom{d-1}{k}} & \text{if } x \notin S.
    \end{cases}\]
    The subset size $k$ is set around
    $\frac{d}{1 + e^\epsilon}$. It is optimal in all privacy regimes for discrete distribution estimation \cite{ye2018optimal}.

\end{itemize}

\section{Problem Formulation}\label{sec:formulation}
In this section, we define the multi-tier differentially private query release problem. Without loss of generality to both central DP and local DP (i.e., where a single user's data constitutes the dataset), we consider a database query setting with the following components:
\begin{itemize}
    \item A dataset $X \in \mathcal{D}$, where $\mathcal{D}$ is the domain of the dataset.
    \item A query function $q: \mathcal{D} \mapsto \mathcal{Y}$, which maps a dataset to an output in some domain $\mathcal{Y}$.
    \item A set of analysts $\mathcal{A} = \{1, 2, \dots, m\}$.
    \item A list of privacy budgets $L = \{\epsilon_1, \epsilon_2, \dots, \epsilon_m\}$, where $\epsilon_i > 0$ is the privacy budget assigned to analyst $i \in \mathcal{A}$.
    \item A mechanism $\mathcal{M}: \mathcal{D} \times \mathbb{R}^m \mapsto \mathcal{Y}^m$ that collectively generates $m$ query results $R = \{r_1, r_2, \dots, r_m\}$ given the input dataset and the budget list $L$, where $r_i$ is the result released to analyst $i$ with associated privacy budget $\epsilon_i$.
\end{itemize}

\textbf{Privacy objectives:} We require that the marginal distribution of each single result $r_i$ satisfies the $\epsilon_i$ level of DP (or LDP). Additionally, to limit the accumulated privacy loss from multiple results (e.g., due to the aforementioned collusion scenarios), we also require that for any subset of results, the privacy loss does not exceed the maximum marginal privacy budget among the component results. We call this \emph{collusion resistance} (see Definition \ref{def:collusionresistant}). 

\begin{definition}[Collusion-resistant Property]\label{def:collusionresistant}
A multi-tier mechanism $\mathcal{M}$ satisfies the collusion-resistant property if for any subset $S \subseteq \mathcal{A}$ of analysts, the joint distribution of $\mathcal{M}_S(X,L)$ over the coordinates in $S$ satisfies $(\max_{i \in S} \epsilon_i)$-differential privacy.
\end{definition}

Because the marginal privacy budget of some $r_i$ may be looser (e.g., $r_i$ actually satisfies a DP level $\epsilon'_i$ where $\epsilon'_i < \epsilon_i$), the joint distribution of $\mathcal{M}_S(X,L)$ in the previous collusion resistance condition could exceed $\max_{i\in S} \epsilon'_i$. It is therefore necessary to define a stricter version of collusion resistance. Following the principle of the data processing inequality, we define the \emph{advanced collusion-resistant} property (see Definition \ref{def:collusionresistant2}). Since the hockey-stick divergence satisfies the data processing inequality, it is easy to see that the advanced collusion-resistant property implies the basic collusion-resistant property. 

\begin{definition}[Advanced Collusion-resistant Property]\label{def:collusionresistant2}
A mechanism $\mathcal{M}$ satisfies the advanced collusion-resistant property if, for any subset $S \subseteq \mathcal{A}$ of analysts, any adjacent datasets $X, X' \in \mathcal{D}$, and any distance measure $D$ that satisfies the data processing inequality, the following holds:
\begin{alignat*}{2}
&D(\mathcal{M}_S(X, L) \| \mathcal{M}_S(X', L))\\
\leq&\max_{i\in S} D(\mathcal{M}_{\{i\}}(X, L) \| \mathcal{M}_{\{i\}}(X', L)).
\end{alignat*}
\end{definition}

This definition is equivalent to the \emph{Blackwell ordering} property~\cite{blackwell1953equivalent}: for every subset $S \subseteq \mathcal{A}$ of analysts and every pair of adjacent datasets $X, X' \in \mathcal{D}$, there exists an analyst $i \in S$ (which may depend on $S$, $X$, and $X'$) such that the joint output distribution $\mathcal{M}_S(X, L)$ is a post-processing of $\mathcal{M}_{\{i\}}(X, L)$ and, simultaneously, $\mathcal{M}_S(X', L)$ is the same post-processing of $\mathcal{M}_{\{i\}}(X', L)$.

\textbf{Utility objectives:} From an honest analyst's perspective, we require that the received result $r_i$ maintains optimal utility under the constraint of $\epsilon_i$-DP (or LDP), up to a small constant factor. That is, each analyst $i$ attains utility comparable to what they would obtain from an optimal single-tier mechanism operating at privacy level $\epsilon_i$, thereby. We formalize this approximate optimal utility requirement of the multi-tier DP query system as the \emph{utility-match} property (see Definition \ref{def:utilitymatch}).

\begin{definition}[$\alpha$-Utility-Match Property]\label{def:utilitymatch}
Given an error measure $\text{Error}: \mathcal{Y} \times \mathcal{Y} \mapsto \mathbb{R}$, a positive constant $\alpha \in \mathbb{R}^{\geq 1}$, and a multi-tier mechanism $\mathcal{M}$, we say that $\mathcal{M}$ satisfies the $\alpha$-utility-match property if, for any privacy budget $\epsilon' \in \mathbb{R}^+$, any input dataset $X \in \mathcal{D}$, any $m \in \mathbb{N}^+$, and any budget list $L \in \mathbb{R}^m$ such that $\epsilon'$ is the $i'$-th element of $L$, the result $r_{i'} = \mathcal{M}_{\{i'\}}(X, L)$ achieves error bounds that match the optimal bounds:
\[\mathbb{E}[\text{Error}(r_{i'}, q(X))] \leq \alpha \cdot \mathbb{E}[\text{Error}(\mathcal{M}^*(X, \epsilon'), q(X))],\]
where $\mathcal{M}^*: \mathcal{X} \times \mathbb{R}^+ \mapsto \mathcal{Y}$ denotes the optimal mechanism for query $q$ under privacy budget $\epsilon'$.
\end{definition}
This property provides an incentive for analysts to participate in the multi-tier system.

\textbf{Remark 3.1.} We note that other types of requirements may also be relevant. For example, $\mathcal{M}$ should operate efficiently, in time polynomial in the dataset size and number of analysts. In this paper, our proposals satisfy this efficiency requirement.


\section{Multi-tier Noise-adding Mechanisms}\label{sec:noise}
This section presents a general multi-tier framework for noise-adding DP mechanisms based on characteristic functions, and then provides concrete realizations for count queries.

\subsection{A Framework via Characteristic Functions}
The basic idea of the framework is to gradually inject calibrated noises. It first generates a query result with the highest budget, then uses that high-budget result to generate lower-budget results sequentially, without accessing the dataset again. Therefore, by the data processing inequality, these multi-tier results naturally satisfy the advanced collusion-resistant property. Specifically, our framework leverages the characteristic function (CF) of the noise distribution to derive the calibrated residual noise needed to transform a high-budget result into a low-budget result.

Consider two noise distributions $\eta(\epsilon_i)$ and $\eta(\epsilon_{i+1})$ with $\epsilon_i > \epsilon_{i+1}$, corresponding to some noise-adding DP mechanisms that provide $\epsilon_i$-DP and $\epsilon_{i+1}$-DP guarantees, respectively. Our objective is to transform an output $r_i = q(X) + \eta(\epsilon_i)$ into an output $r_{i+1} \overset{d}{=} q(X) + \eta(\epsilon_{i+1})$ without direct access to $q(X)$ or to the dataset $X$.

The key insight is to exploit the summation property of characteristic functions. For independent random variables, the characteristic function of their sum is the product of their individual characteristic functions. Applying this property, we have:
\[\Phi_{q(X) + \eta(\epsilon_{i+1})}(t) = e^{j t q(X)} \cdot \Phi_{\eta(\epsilon_{i+1})}(t);\]
\[\Phi_{q(X) + \eta(\epsilon_i)}(t) = e^{j t q(X)} \cdot \Phi_{\eta(\epsilon_i)}(t).\]
Dividing these equations isolates the residual noise (denoted as $s_i$):
\[\Phi_{s_i}(t) = \frac{\Phi_{q(X) + \eta(\epsilon_{i+1})}(t)}{\Phi_{q(X) + \eta(\epsilon_i)}(t)} = \frac{\Phi_{\eta(\epsilon_{i+1})}(t)}{\Phi_{\eta(\epsilon_i)}(t)},\]
which represents the characteristic function of the incremental residual noise required to transform $r_i$ into $r_{i+1}$.

Crucially, $\Phi_{s_i}(t)$ depends \emph{only} on the noise distributions of $\eta(\epsilon_i)$ and $\eta(\epsilon_{i+1})$, and is independent of both $q(X)$ and the underlying dataset $X$. This property enables the Markov transformation without direct access to the true query answer $q(X)$ or the dataset $D$. Given an observed output $r_i = q(X) + \eta(\epsilon_i)$, applying the transformation $r_{i+1} = r_i + s_i$ produces a new output distributed identically to $q(X) + \eta(\epsilon_{i+1})$. We present the overall procedure in Algorithm \ref{alg:multi-tier-framework}. The distribution of $s_i$ can be derived from its characteristic function $\Phi_{s_i}(t)$ via the inverse Fourier transform \cite{shephard1991characteristic}, provided that $\Phi_{s_i}(t)$ is a valid characteristic function (a condition that can be checked using established criteria \cite{bochner2005harmonic}).

This framework satisfies the advanced collusion-resistant property, since for any $S \subseteq [m]$, all low-budget results are derived from the highest-budget result in $S$. See formal statement in Theorem \ref{the:noiseaddingprivacy} (proved in Appendix \ref{app:noiseaddingprivacy}). \revise{As shown in the next section, the algorithm may fail depending on the choice of the base mechanism. However, for the most commonly used mechanisms, including the Laplace and Gaussian mechanisms, the algorithm is guaranteed to succeed.}

\begin{figure}[t]
\centering
\includegraphics[width=0.75\linewidth]{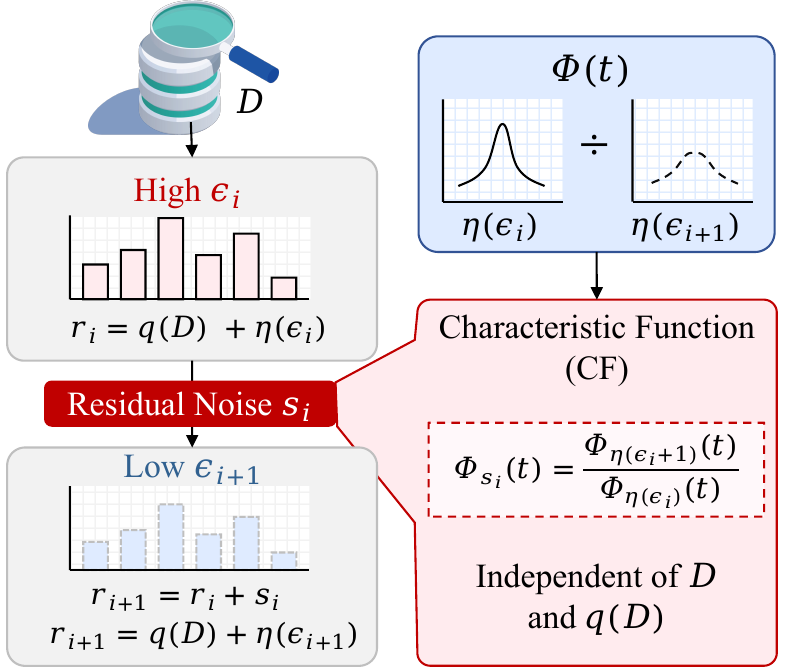}
\caption{Procedures of multi-tier DP noises adding from $\epsilon_i$ to $\epsilon_{i+1}$.}
\label{fig:noiseadding}
\end{figure}

\begin{theorem}[Advanced Collusion Resistance of Multi-tier Noise-adding Framework]\label{the:noiseaddingprivacy}
Algorithm \ref{alg:multi-tier-framework} satisfies the advanced collusion-resistance property.
\end{theorem}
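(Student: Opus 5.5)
Without loss of generality I will sort the budgets so that $\epsilon_1 \ge \epsilon_2 \ge \dots \ge \epsilon_m$ (ties are harmless: the residual noise is then degenerate at $0$). Algorithm \ref{alg:multi-tier-framework} touches the dataset only once, to form $r_1 = q(X) + \eta(\epsilon_1)$. Every later output is produced recursively as $r_{i+1} = r_i + s_i$, where the residual noises $s_1,\dots,s_{m-1}$ are drawn independently of $X$ and of each other. Their laws are fixed by $\Phi_{s_i} = \Phi_{\eta(\epsilon_{i+1})}/\Phi_{\eta(\epsilon_i)}$, and so depend only on $L$.

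I plan to prove the Blackwell-ordering form of Definition \ref{def:collusionresistant2}, and then deduce the divergence inequality from Lemma \ref{thm:dpi}.

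\textbf{Step 1: the joint output is a post-processing of a single marginal.} Fix a subset $S \subseteq \mathcal{A}$ and let $i^* = \min S$, the index of the largest budget in $S$. For each $k \in S$ with $k > i^*$ we can write
\[
r_k = r_{i^*} + \sum_{l=i^*}^{k-1} s_l .
\]
Define the Markov kernel $K_S$ as follows. On input $y$, it samples independent $s_{i^*},\dots,s_{\max S - 1}$ from their prescribed laws. It then outputs the tuple $\bigl(y + \sum_{l=i^*}^{k-1} s_l\bigr)_{k\in S}$, with the empty sum for $k = i^*$.

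The residuals $s_l$ with $l \ge i^*$ are independent of $r_{i^*}$, since $r_{i^*}$ is a function of $\eta(\epsilon_1)$ and of $s_1,\dots,s_{i^*-1}$ only. Hence $\mathcal{M}_S(X,L) = K_S(\mathcal{M}_{\{i^*\}}(X,L))$ in distribution. The same kernel $K_S$ serves every dataset, because it does not depend on $X$. This is exactly the Blackwell-ordering statement.

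\textbf{Step 2: the divergence bound.} Let $X, X'$ be adjacent and let $D$ be any measure satisfying the data processing inequality. Lemma \ref{thm:dpi} gives
\[
D(\mathcal{M}_S(X,L)\|\mathcal{M}_S(X',L)) \le D(\mathcal{M}_{\{i^*\}}(X,L)\|\mathcal{M}_{\{i^*\}}(X',L)),
\]
and the right-hand side is at most $\max_{i\in S}$ of the single-coordinate divergences. As a side remark, the CF identity together with the bijection property shows that $r_k \overset{d}{=} q(X)+\eta(\epsilon_k)$. This confirms the marginal guarantee, though collusion resistance itself does not need it.

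\textbf{Main obstacle.} The only delicate point is the independence claim in Step 1. I need the residuals used after index $i^*$ to be independent of $r_{i^*}$ itself, not merely of the original noise $\eta(\epsilon_1)$. This follows from the algorithm's sampling of fresh, mutually independent $s_l$. The argument also presupposes that each ratio $\Phi_{s_i}$ is a valid characteristic function; for the instances where it is not, I would state the theorem as conditional on the algorithm not aborting.
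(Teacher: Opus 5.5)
Your proposal is correct and follows the paper's argument. You fix $i^*=\min S$, build the data-independent kernel that adds fresh residuals to $r_{i^*}$, identify $\mathcal{M}_S(X,L)$ with its image, and apply the data processing inequality; your explicit check that the later residuals are independent of $r_{i^*}$ is a step the paper leaves implicit. On the abort case, you do not need to condition: the paper notes that whether some $\Phi_{s_i}$ fails to be a valid characteristic function depends only on $L$ and the noise family, never on $X$, so a failed run releases nothing data-dependent.
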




\begin{algorithm}
\caption{Multi-tier DP Noise Adding}
\label{alg:multi-tier-framework}
\KwIn{Dataset $X$, query $q$,  privacy list $\epsilon_1 > \epsilon_2 > \dots > \epsilon_m$, noise distributions $\{\eta(\epsilon_i)\}_{i\in [m]}$}
\KwOut{Outputs $\{r_i\}_{i\in [m]}$ where $r_i \overset{d}{=} q(X) + \eta(\epsilon_i)$}

\tcp{\color{gray}step 1: Generate highest-budget output}
$r_1 \gets q(D) + \eta(\epsilon_1)$

\For{$i = 1$ to $m-1$}{
    \tcp{\color{gray}step 2: Add residual noise}
    $\Phi_{s_i} \gets \frac{\Phi_{\eta(\epsilon_{i+1})}}{\Phi_{\eta(\epsilon_i)}}$

    \eIf{$\Phi_{s_i}$ is a valid characteristic function (CF)}{
        \tcp{\color{gray}derive the residual distribution}
        $s_i\gets \textsf{CF}^{-1}(\Phi_{s_i})$
        
        $s \sim s_i$
        
        $r_{i+1} \gets r_i + s$
    }{
        \Return Failed
    }
}
\Return $\{r_1, r_2, \dots, r_m\}$
\end{algorithm}

\begin{table*}[t]
    \centering
    \setlength{\tabcolsep}{3pt}
    \renewcommand{\arraystretch}{1}
    \begin{tabular}{l|c|c|c|c}
        \hline
       \bfseries Noise Mechanisms  &  \bfseries Distribution &  \bfseries CF & \bfseries Residual CF &  \bfseries Residual Distribution  \\\hline
       
       $Laplace(\epsilon)$ \cite{dwork2006differential} 
       & $\frac{\epsilon}{2}e^{-\epsilon|x|}$ 
       & $\frac{\epsilon^2}{\epsilon^2+t^2}$ 
       & $\frac{\epsilon_{i+1}^2(\epsilon_i^2+t^2)}{\epsilon_i^2(\epsilon_{i+1}^2+t^2)}$ 
       & $\left\{\begin{array}{@{}ll@{}} 0, & \text{w.p. } \frac{\epsilon_{i+1}^2}{\epsilon_i^2} \\ Laplace(\epsilon_{i+1}), & \text{w.p. } 1-\frac{\epsilon_{i+1}^2}{\epsilon_i^2} \end{array}\right.$ \\\hline
       
       $Gaussian(\sigma^2)$ \cite{dwork2008differential} 
       & $\frac{1}{\sqrt{2\pi\sigma^2}}e^{-\frac{x^2}{2\sigma^2}}$ 
       & $e^{-\frac{1}{2}\sigma^2 t^2}$ 
       & $e^{-\frac{1}{2}(\sigma_{i+1}^2-\sigma_i^2) t^2}$ 
       & $Gaussian(\sigma_{i+1}^2 - \sigma_i^2)$ \\\hline
       
       Two-sided Geometric ($TSG(p)$)  \cite{ghosh2012universally} 
       & $\frac{1-p}{1+p}p^{|x|}$ 
       & $\frac{(1-p)^2}{1-2p\cos{t}+p^2}$ 
       & $\frac{(1-p_{i+1})^2(1-2p_{i}\cos{t}+p_{i}^2)}{(1-p_{i})^2(1-2p_{i+1}\cos{t}+p_{i+1}^2)}$ 
       & $\left\{\begin{array}{@{}ll@{}} 0, & \text{w.p.} \frac{(1-p_{i+1})^2p_{i}}{(1-p_{i})^2p_{i+1}} \\ TSG(p_{i+1}),&\text{w.p.} \frac{(p_{i+1}-p_i)(1-p_ip_{i+1})}{(1-p_i)^2 p_{i+1}} \end{array}\right.$ \\\hline
       
       $Skellam(\lambda)$ \cite{agarwal2021skellam} 
       & $e^{-2\lambda}I_{|x|}(2\lambda)$ 
       & $e^{2\lambda(\cos{t}-1)}$ 
       & $e^{2(\lambda_{i+1}-\lambda_i)(\cos{t}-1)}$ 
       & $Skellam(\lambda_{i+1}-\lambda_i)$ \\\hline
        \hline
       
       discrete Gaussian ($DG(\sigma^2)$) \cite{canonne2020discrete} 
       & $\frac{e^{-{x^2}/(2\sigma^2)}}{\sum_{z\in\mathbb{Z}} e^{-{z^2}/(2\sigma^2)}}$ 
       & $\frac{\sum_{x\in\mathbb{Z}} e^{-{x^2}/(2\sigma^2) + jtx}}{\sum_{z\in\mathbb{Z}} e^{-{z^2}/(2\sigma^2)}}$ 
       & \textbf{\color{red} invalid} (in general) 
       & \textbf{\color{red} $\mathbf{\times}$} (see Remark 4.1) \\\hline
       
    \end{tabular}
    \caption{Characteristic functions and residuals of commonly-used noise-adding DP mechanisms.}
    \label{tab:residuals}
\end{table*}

\subsection{Use Cases}
In this section, we instantiate the proposed multi-tier noise addition framework with state-of-the-art base mechanisms for count queries, and also demonstrate cases where the base mechanism fails.

\subsubsection{Multi-tier Count Queries with $\Delta=1$}
The two-sided Geometric (TSG) mechanism is known to be universally optimal for count queries \cite{ghosh2012universally} with $\Delta=1$, and its noise's characteristic function is $\frac{(1-p_i)^2}{1 - 2p_i \cos t + p_i^2}$, where $p_i = e^{-\epsilon_i / \Delta f}$. In the multi-tier noise-adding framework, the residual kernel $\Phi_{s_i}(t)$ is then:
\[\Phi_{s_i}(t) = \frac{\Phi_{\eta(\epsilon_{i+1})}(t)}{\Phi_{\eta(\epsilon_i)}(t)} = \frac{(1-p_{i+1})^2 (1 - 2p_i \cos t + p_i^2)}{(1-p_i)^2 (1 - 2p_{i+1} \cos t + p_{i+1}^2)}.\]
By the mixture property, this residual noise's characteristic function can be decomposed as:
\begin{alignat*}{2}
&\frac{(1-p_{i+1})^2 (1 - 2p_i \cos t + p_i^2)}{(1-p_i)^2 (1 - 2p_{i+1} \cos t + p_{i+1}^2)} \\
=&\frac{(1-p_{i+1})^2 p_i}{(1-p_i)^2 p_{i+1}} \cdot 1 + \frac{(p_{i+1} - p_i)(1 - p_i p_{i+1})}{(1-p_i)^2 p_{i+1}} \cdot \frac{(1-p_{i+1})^2}{1 - 2p_{i+1} \cos t + p_{i+1}^2},
\end{alignat*}
where $1$ is the characteristic function of the Dirac distribution at value $0$, and $\frac{(1-p_{i+1})^2}{1 - 2p_{i+1} \cos t + p_{i+1}^2}$ is the characteristic function of a TSG noise. Therefore, the residual noise $s_i$ is as follows:
\begin{equation}\label{eq:tsgresidual}
    s_i = \left\{
    \begin{array}{@{}ll@{}}
        0, & \text{w. p. } \frac{(1-p_{i+1})^2p_{i}}{(1-p_{i})^2p_{i+1}}; \\
        \text{TSG}(p_{i+1}), & \text{w. p. } \frac{(p_{i+1}-p_i)(1-p_ip_{i+1})}{(1-p_i)^2 p_{i+1}}.
    \end{array}
\right.
\end{equation}

\subsubsection{Multi-tier Count Queries with $\Delta > 1$}
We note that an optimal mechanism for count queries with $\Delta > 1$ when $\epsilon > 1$ (i.e., \emph{MSDLap} \cite{harrison2025infinitely}) is available via weighted summation of $\Delta$ independent two-sided Geometric noises, each with budget $\epsilon$. Specifically, it achieves an MSE error rate of $O(\Delta^3 \cdot e^{-\epsilon})$ and is constructed as follows:
\[
\text{MSDLap}(\epsilon) = \sum_{\xi \in [\Delta]} \xi \cdot \text{TSG}(\epsilon).
\]
For multi-tier privacy levels, one can simply add $\Delta$ weighted independent residual noises from Equation \ref{eq:tsgresidual} (i.e., $\sum_{\xi \in [\Delta]} \xi \cdot r_i$) to transform from $\text{MSDLap}(\epsilon_i)$ to $\text{MSDLap}(\epsilon_{i+1})$, thereby obtaining a multi-tier version of the MSDLap mechanism.


\subsubsection{Other Queries. } Many other popular DP noise-adding mechanisms (e.g., Laplace \cite{dwork2006differential}, Gaussian \cite{dwork2008differential}, Skellam \cite{agarwal2021skellam}) can also be turned into multi-tier versions within our framework. We list their characteristic functions and residual noises in Table \ref{tab:residuals}. Specifically, the Laplace mechanism are commonly used for numerical queries without non-integer constraints; the Skellam mechanism are particular useful for decentralized DP queries over cryptographic secure aggregation where query results must be quantized for compatible in the cryptographic domain and each party injects partial noises. \revise{These mechanisms, together with the two-sided geometric mechanism, encompass most use cases of noise-adding DP (including those implemented in popular repositories such as OpenDP, IBM's Diffprivlib, PySyft, Google DP, TensorFlow Privacy, and Opacus).}


\textbf{Remark 4.1.} Our framework also facilitates analyzing the existence of multi-tier versions for other prevalent noise-adding mechanisms. For example, the discrete Gaussian (DG) mechanism \cite{canonne2020discrete} has characteristic function:
\[
\Phi_{\sigma}(t) := \left({\sum\nolimits_{k \in \mathbb{Z}} e^{-k^2/(2\sigma^2)}e^{jkt}}\right)/\left( {\sum\nolimits_{k \in \mathbb{Z}} e^{-k^2/(2\sigma^2)}}\right),
\]
and the residual function is $\Phi_{\sigma_{i+1}}(t) / \Phi_{\sigma_i}(t)$.
Recall that any valid characteristic function $R: \mathbb{R} \to \mathbb{C}$ must be \emph{positive definite} \cite{bochner2005harmonic}: for every $n \in \mathbb{N}$ and every choice of points $t_1, \dots, t_C \in \mathbb{R}$, the associated \emph{Bochner matrix}
\[
M := \bigl[ M_{a,b} \bigr]_{a,b \in [C]}, \qquad
M_{a,b} := R(t_a - t_b),
\]
must be Hermitian positive semi-definite. Equivalently, all eigenvalues of $M$ must be non-negative. We construct a counterexample for the residual function $$R(t) = \Phi_{\sigma_{i+1}}(t) / \Phi_{\sigma_i}(t)$$ with $\sigma_{i+1} = 1.1$ and $\sigma_i = 1$. When $t$ takes values in $\{\pi \cdot (k-1)/2\}_{k \in [4]}$, the Bochner matrix $M \in \mathbb{C}^{4 \times 4}$ with entries $M_{a,b} = R(t_a - t_b)$ is:
$$
\begin{bmatrix}
1 & 0.771729 & 0.354762 & 0.771729 \\
0.771729 & 1 & 0.771729 & 0.354762 \\
0.354762 & 0.771729 & 1 & 0.771729 \\
0.771729 & 0.354762 & 0.771729 & 1 \\
\end{bmatrix}
$$
It has a smallest eigenvalue of approximately $-0.1886953 < 0$.
Therefore, $M$ is not positive semi-definite, the $\Phi_{\sigma_{i+1}}(t) / \Phi_{\sigma_i}(t)$ is not a valid characteristic function, and no additive noise can transform from $\text{DG}(1^2)$ to $\text{DG}(1.1^2)$. \revise{A similar failure happens to the staircase mechanism \cite{geng2015staircase} (see Appendix \ref{app:staircase})}.

\section{Multi-tier Mapping Mechanisms}\label{sec:transition}
In this section, we study multi-tier DP count queries  in the local model of DP, which often employ mapping-based mechanisms. Intuitively, these mechanisms define a general transition or mapping probability matrix, and the output domain may not align with the input domain (unlike noise-adding mechanisms). Particularly,  for count queries over categorical data, we propose a multi-tier LDP mechanism based on the  Subset mechanism \cite{wang2019local}.

\revise{
\subsection{A Template-based Method}
Many state-of-the-art mapping-based mechanisms use output domains with hyperparameters, whose optimal values depend on the privacy budget $\epsilon$. Examples include the subset size in Subset mechanisms \cite{wang2019local}, the hash domain size in local hashing \cite{wang2017locally}, and the partition size in Hadamard response \cite{acharya2019hadamard}. These mechanisms face challenges in multi-tier settings: transitions between privacy levels require adapting both hyperparameters and probability ratios that determine privacy levels, and it might be impossible to meet optimal hyperparameter conditions for all target levels $\epsilon_i$.
As a result, utility gaps may arise, unlike in multi-tier noise-adding mechanisms, where smooth residual noise injection is often possible.

To address this, we propose a template-based method. It firstly defines primitive operators that adapt hyperparameters while maintaining distributional equivalence, then identifies reference privacy levels where these transitions are viable and generate corresponding templates, finally matches the privacy list $L$ to the templates. Specifically, the method works as follows:
\begin{enumerate}
[leftmargin=2em]
\item\emph{Design primitive operators:} For a base mechanism with hyperparameter $\theta$, define operators that can adapt $\theta$  and/or adjust the probability ratio to achieve target privacy levels, while preserving distributional equivalence to the base mechanism.

\item \emph{Set reference privacy levels:} Identify privacy levels $\{\epsilon'_k\}$ in descending order where hyperparameter and probability ratio transitions are naturally supported by the primitive operators. 

\item \emph{Generate templates:} For each $\epsilon'_k$, sequentially generate a template $r'_k$ from $r'_{k-1}$ using the primitive operators. 

\item \emph{Match levels:} For target level $\epsilon_i$, select the closest template with $\epsilon'_k \leq \epsilon_i$.
\end{enumerate}

When reference privacy levels are close to each other, the utility gap between the optimal error bound of $\epsilon_i$ and the matched template are guaranteed to be small. In following subsections, we demonstrate this method using the Subset mechanism \cite{wang2016mutual,wang2019local}. This method also applies to the local hash mechanism \cite{wang2017locally} (see Appendix \ref{app:multitierLH}). We focus on the Subset mechanism, as it is known to achieve optimal results for LDP count queries in all privacy regimes \cite{ye2018optimal}.
}

\subsection{Multi-tier Subset Mechanism}\label{subsec:subset}
Recall that in the Subset mechanism $\mathcal{R}$ over a one-hot data domain $x \in \{0,1\}^d$ with budget $\epsilon$ and subset size $k$, the variance is:
\[
V(\epsilon, k) = \mathbb{E}[(\mathcal{R}(x) - x)_2^2] = \frac{t(1-t) + (d-1)f(1-f)}{(t-f)^2},
\]
where $t = \frac{k e^\epsilon}{k e^\epsilon + d - k}$ and $f = \frac{k e^\epsilon (k-1) + (d-k)k}{(k e^\epsilon + d - k)(d-1)}$. The variance formula can be further simplified as:
\[
V(\epsilon, k) = \frac{(d-1)\left(k - d + (d-k)^2 + 2e^{\epsilon}(d-k)k + e^{2\epsilon}(k-1)k\right)}{(e^{\epsilon} - 1)^2 (d-k)k}.
\]

Theoretical analyses \cite{wang2016mutual,wang2019local,ye2018optimal} indicate that $k \approx d / (e^{\epsilon} + 1)$ minimizes the estimation variance, and the optimal subset size is non-decreasing with the privacy budget. Therefore, we introduce two primitive operations for adjusting the subset size/privacy sequentially (from a relatively lower privacy setting to a relatively higher one) in the multi-tier setting without accessing the original true value:

\textbf{Expansion operator.} Given a Subset output $r$ of size $k$ and privacy level $\epsilon$ (i.e., $r \overset{\mathrm{d}}{=} \text{Subset}(x, k, \epsilon)$), the expansion operator extends the subset by one element. Specifically, we add one uniformly random element from $[d] \setminus r$ to $r$, obtaining:
\[
\text{Expansion}(r) = r \cup \{\text{uniform}([d] \setminus r)\}.
\]
In Lemma \ref{lemma:expansion}, we show that the expanded result is distributionally equivalent to a Subset output with a lower budget.

\begin{lemma}[Properties of expansion operation]\label{lemma:expansion}
Let $r\in [d]^k$ denote the output variable of the subset mechanism tightly satisfies with budget $\epsilon$ (i.e., $r\overset{\mathrm{d}}{=}  Subset(x,k,\epsilon)$), then $$\text{Expansion}(r) \overset{\mathrm{d}}{=}Subset\left(x,k+1,\ln\left(\frac{e^{\epsilon}k+1}{k+1}\right)\right).$$
\end{lemma}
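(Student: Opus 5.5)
The plan is to compute the output distribution of $\text{Expansion}(r)$ directly and match it with the Subset mechanism's definition at size $k+1$. Fix $x\in[d]$ and a target subset $S'\subseteq[d]$ with $|S'|=k+1$. Write $Z=\binom{d-1}{k-1}e^{\epsilon}+\binom{d-1}{k}$ for the Subset normalizer at size $k$. The expanded output equals $S'$ exactly when $r=S'\setminus\{y\}$ for some $y\in S'$ and the uniformly drawn new element is $y$. Given $r$, that element is chosen with probability $1/(d-k)$. Hence
\[
\Pr[\text{Expansion}(r)=S'\mid x]=\frac{1}{d-k}\sum_{y\in S'}\Pr[r=S'\setminus\{y\}\mid x].
\]

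The sum splits according to whether $x\in S'$.
\begin{itemize}
\item If $x\notin S'$, then none of the $k+1$ sets $S'\setminus\{y\}$ contains $x$. Each has probability $1/Z$, so the total is $\frac{k+1}{(d-k)Z}$.
\item If $x\in S'$, then removing any $y\neq x$ gives $k$ sets that contain $x$, each with weight $e^{\epsilon}/Z$. Removing $y=x$ gives one set without $x$, with weight $1/Z$. The total is $\frac{ke^{\epsilon}+1}{(d-k)Z}$.
\end{itemize}

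In both cases the probability depends on $S'$ only through whether $x\in S'$, which is the structure of the Subset mechanism at size $k+1$. The ratio of the two levels is $\frac{ke^{\epsilon}+1}{k+1}$, so the new budget is $\epsilon'=\ln\frac{e^{\epsilon}k+1}{k+1}$. Because Expansion is a post-processing step, the probabilities automatically sum to $1$. The normalizer therefore agrees with $\binom{d-1}{k}e^{\epsilon'}+\binom{d-1}{k+1}$ without a separate check. As a sanity check, the identity
\[
(d-k)Z=(k+1)\Bigl[\tbinom{d-1}{k}e^{\epsilon'}+\tbinom{d-1}{k+1}\Bigr]
\]
can be verified with $\binom{d-1}{k-1}(d-k)=k\binom{d-1}{k}$ and $\binom{d-1}{k}(d-k-1)=(k+1)\binom{d-1}{k+1}$.

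The main subtle point is the word ``tightly'': the conclusion should be that the output is exactly the Subset mechanism with budget $\epsilon'$. The case computation already gives this, since the likelihood ratio between two inputs is $\max/\min=e^{\epsilon'}$, attained by sets that contain one input but not the other (these exist whenever $k+1<d$). Everything else is routine counting.
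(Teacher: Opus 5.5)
Your proposal is correct and follows the paper's proof essentially step for step. It uses the same sum over the $k+1$ preimages $S'\setminus\{y\}$, each weighted by $1/(d-k)$, and the same two-case count giving $(ke^{\epsilon}+1)/((d-k)Z)$ and $(k+1)/((d-k)Z)$. Your normalizer identity $(d-k)Z=(k+1)\bigl[\binom{d-1}{k}e^{\epsilon'}+\binom{d-1}{k+1}\bigr]$ is the correct form; the paper's final displayed lines write $e^{\epsilon}$ where $e^{\epsilon'}$ is meant. Your post-processing argument for normalization avoids needing that identity at all.
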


\textbf{Rescale operator.} Given a Subset output $r$ and a parameter $\beta \in [0,1)$, the rescale operation $\text{Rescale}(r, \beta)$ outputs the original input $r$ with probability $\beta$, otherwise it samples a uniform random element from $\{s \subseteq [d] : |s| = |r|\}$ as the output:
$$Rescale(r,\beta)=\begin{cases}
r,\ \ \ \ \ \ \ \ \ \ \ \ \ \ \ \ \ \ \ \ \ \ \ \ \ \ \ \ \ \ \ \ \ \ \ \ \ \ \ \ \ \ \ \ \ \ \ \ w.p.\  \beta;\\
\textsf{uniform}(\{s \subseteq [d] : |s| = |r|\}),\ \ w.p.\  1-\beta.\end{cases}$$
In Lemma \ref{lemma:rescale}, we show that $\text{Rescale}(r, \beta)$ satisfies a new level of LDP and is distributionally equivalent to a new level Subset output (see Appendix \ref{app:rescale} for proof). Conversely, setting the parameter to
\[
\beta = \frac{(e^{\epsilon'} - 1)(k e^{\epsilon} + d - k)}{d(e^{\epsilon} - e^{\epsilon'})},
\]
transforms the privacy level from $\epsilon$ to $\epsilon'$ (assuming $\epsilon' < \epsilon$).

\begin{lemma}[Properties of rescale operation]\label{lemma:rescale}
Let $r\in [d]^k$ denote the output variable of the subset mechanism with budget $\epsilon$ (i.e., $r\overset{\mathrm{d}}{=}  Subset(x,k,\epsilon)$), then for $0\leq \beta\leq 1$: $$\text{Rescale}(r,\beta)\overset{\mathrm{d}}{=}Subset\left(x,k,\log\left(\frac{d\beta e^{\epsilon} + (1-\beta)(k e^{\epsilon} + d - k)}{d\beta + (1-\beta)(k e^{\epsilon} + d - k)}\right)\right).$$ 
\end{lemma}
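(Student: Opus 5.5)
The plan is a direct computation of the output distribution of $\text{Rescale}(r,\beta)$, conditioned on the true input $x$. I will then match it to the Subset form: one probability for subsets containing $x$, another for subsets not containing $x$, and the stated $\epsilon'$ equal to the log of their ratio.

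\emph{Step 1: write the Subset probabilities.} Let $Z=\binom{d-1}{k-1}e^{\epsilon}+\binom{d-1}{k}$. Then $\Pr[r=S\mid x]=e^{\epsilon}/Z$ if $x\in S$ and $1/Z$ otherwise. The identity $\binom{d-1}{k}=\binom{d-1}{k-1}\frac{d-k}{k}$ gives
\[
Z=\binom{d-1}{k-1}\frac{ke^{\epsilon}+d-k}{k}.
\]
Writing $N=\binom{d}{k}=\binom{d-1}{k-1}\frac{d}{k}$ for the number of size-$k$ subsets, this becomes $Z=N(ke^{\epsilon}+d-k)/d$.

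\emph{Step 2: mix with the uniform distribution.} By the law of total probability, for any size-$k$ set $S$,
\[
\Pr[\text{Rescale}(r,\beta)=S\mid x]=\beta\Pr[r=S\mid x]+(1-\beta)/N.
\]
This depends on $S$ only through whether $x\in S$, so the output is again a Subset-type distribution with size $k$. Denote the two values by $p_{\text{in}}=\beta e^{\epsilon}/Z+(1-\beta)/N$ and $p_{\text{out}}=\beta/Z+(1-\beta)/N$.

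\emph{Step 3: read off the budget.} A distribution on size-$k$ subsets with exactly these two probability levels is exactly $Subset(x,k,\epsilon')$ with $e^{\epsilon'}=p_{\text{in}}/p_{\text{out}}$. Normalization is automatic, since both the original output and the uniform distribution are normalized, so there is no need to check that the normalizer equals $\binom{d-1}{k-1}e^{\epsilon'}+\binom{d-1}{k}$. Multiplying $p_{\text{in}}$ and $p_{\text{out}}$ by $Z$ and using $Z/N=(ke^{\epsilon}+d-k)/d$ gives
\[
e^{\epsilon'}=\frac{\beta e^{\epsilon}+(1-\beta)(ke^{\epsilon}+d-k)/d}{\beta+(1-\beta)(ke^{\epsilon}+d-k)/d}.
\]
Multiplying numerator and denominator by $d$ yields the formula in the statement. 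Taking the log gives $\epsilon'$.

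The main obstacle is modest: the normalizer bookkeeping $Z/N$ in Step 3. Everything else is immediate.

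For the remark preceding the lemma, the stated $\beta$ would be verified by solving this linear-fractional equation for $\beta$. Write $A=(ke^{\epsilon}+d-k)/d$. Then
\[
e^{\epsilon'}\bigl(\beta+(1-\beta)A\bigr)=\beta e^{\epsilon}+(1-\beta)A,
\]
which gives $\beta=\frac{A(e^{\epsilon'}-1)}{e^{\epsilon}-e^{\epsilon'}+A(e^{\epsilon'}-1)}$. Note that this is not literally equal to the paper's expression, because the paper's denominator omits the $A(e^{\epsilon'}-1)$ term. So this part is checked separately and is not part of the lemma.
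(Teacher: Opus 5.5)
Your proof is correct and matches the paper's argument essentially line for line. Like the paper, you write the rescaled output as the mixture $\beta\Pr[r=S\mid x]+(1-\beta)/\binom{d}{k}$, note that it takes two values according to whether $x\in S$, and read off $e^{\epsilon'}$ as their ratio using $Z/\binom{d}{k}=(ke^{\epsilon}+d-k)/d$. Your side remark is also right: the paper's displayed $\beta$ before the lemma actually equals $\beta/(1-\beta)$ for the correct $\beta$. For example, with $d=10$, $k=1$, $e^{\epsilon}=9$, $e^{\epsilon'}=5$ it evaluates to $1.8$. Your inversion gives $\beta=9/14$, which reproduces $e^{\epsilon'}=5$.
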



\textbf{Failure of direct adaptation.} When transforming from a higher budget $\epsilon_i$ to a lower budget $\epsilon_{i+1}$, in order to achieve desirable utility, we may need to adapt both the subset size and the probability ratio (i.e., $e^\epsilon$). However, a straightforward approach that sequentially adapts the subset size and the privacy level for $\epsilon_1, \dots, \epsilon_m$ may be infeasible, since not all exact optimality conditions can be satisfied simultaneously for these privacy levels. For example, with $d=10$, $\epsilon = \ln(6.4)$, and $\epsilon' = \ln(5)$, the optimal subset sizes that minimize variance are $k=1$ and $k'=2$, respectively. After applying an expansion operation to $\text{Subset}(x, k, \epsilon)$, the privacy level of $\text{Expansion}(r)$ becomes $\ln\!\left(\frac{6.4 \cdot 1 + 1}{1 + 1}\right) = \ln(3.7)$, which is lower than the next target privacy level $\epsilon' = \ln(5)$. This also implies that achieving \revise{exactly best utility} for all privacy levels, as in the single-tier subset mechanism, is generally impossible.

\begin{figure*}[t]
\centering
\includegraphics[width=1.0\linewidth]{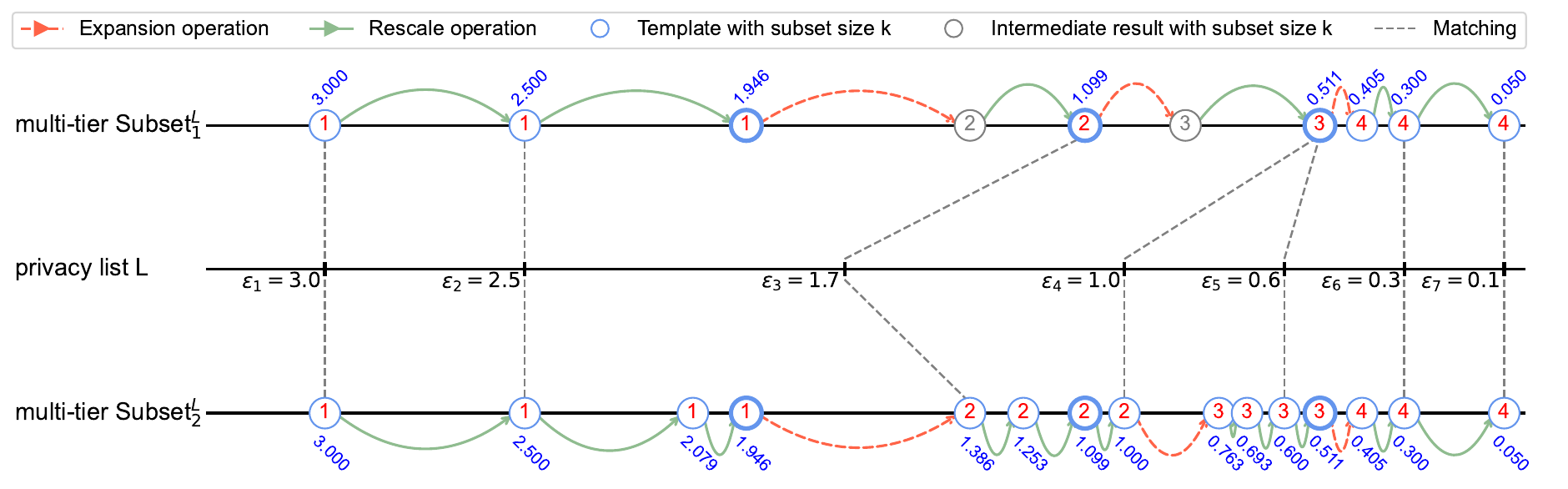}
\vspace*{-1.5em}
\caption{Procedure demonstration of multi-tier subset mechanism with $d=8$.}
\label{fig:multitiersubset}
\end{figure*}

\textbf{A template-based approach.} We aim to \revise{approximately match the best utility and employ the template-based method} that generates a series of viable outputs using the two operations. This series corresponds to a list of reference privacy levels $$\{\ln(d/k - 1)\}_{k \in [1, \lfloor (d-1)/2 \rfloor]},$$ whose optimal subset size changes from $1$ to $\lfloor (d-1)/2 \rfloor$. This is sequentially viable using the \emph{expansion} and \emph{rescale} operators, because when applying the expansion operator to the previous template result $r$ with budget $\ln(d/k - 1)$, according to Lemma \ref{lemma:expansion}, $\text{Expansion}(r)$ satisfies the LDP level of:
\begin{equation}\label{eq:expansionnext}
\ln\left(\frac{k(d/k - 1) + 1}{k + 1}\right) = \ln\left(\frac{d - k + 1}{k + 1}\right) > \ln\left(\frac{d}{k + 1} - 1\right),
\end{equation}
which is always larger than the next reference budget $\ln(d/(k + 1) - 1)$. Finally, we match each target level in $L$ to the closest lower-budget template (see Algorithm \ref{alg:templatematch}). The advantage of this template-based approach over naive sequential adaptation is clear: the template results for the reference privacy levels $\{\ln(d/k - 1)\}_{k \in [1, \lfloor (d-1)/2 \rfloor]}$ are exactly optimal at their levels, and each privacy level $\epsilon_i$ in $L$ can be matched to a nearby template result. Recall the previous example: the privacy level in the naive sequential approach jumped from $\ln(6.4)$ to $\ln(3.7)$, which is lower than the reference privacy level $\ln(4) = \ln(10/2 - 1)$; matching the next target level $\ln(5)$ to $\ln(3.7)$ causes more excess utility loss than matching it to $\ln(4)$. The overall multi-tier subset mechanism (presented in Algorithm \ref{alg:multitiersubset}) consists of five major steps:
\begin{itemize}[leftmargin=1.5em]
    \item[(1)] \emph{Handle low privacy regimes}: For privacy levels $\epsilon_{i+1}$ such that $e^{\epsilon_{i+1}} + 1 > d$ (where $k_{i+1}^* = 1$), sequentially generate the $1$-subset mechanism's output. Use the Rescale operator with 
    \[
    \beta = \frac{(e^{\epsilon_{i+1}} - 1)(e^{\epsilon_i} + d - 1)}{d(e^{\epsilon_i} - e^{\epsilon_{i+1}})}
    \]
    during the sequential procedure, ensuring that the resulting output satisfies the target level of $\epsilon_{i+1}$-LDP (assuming $\epsilon_0 = +\infty$).
    \item[(2)] \emph{Generate middle regime templates}: For each $k \in [1, \lfloor (d-1)/2 \rfloor]$, sequentially generate a template result for $\epsilon'_k = \ln(d/k - 1)$. Each generation iteration involves two steps: first expanding the subset size by $1$, and then rescaling the probability ratio to exactly $d/k - 1$. This is summarized in Algorithm \ref{alg:templategen}, for which Lemma \ref{lemma:expansion} ensures that the resulting probability ratio $\rho$ after expansion is always larger than $d/k - 1$ (i.e., $\epsilon'_k$ is exactly achievable by a further \emph{rescale} operator).
    \item[(3)] \emph{Handle high privacy regimes when $(d+2)/\lfloor d/2 \rfloor < e^{\epsilon} + 1 < d / \lfloor (d-1)/2 \rfloor$}: When $d$ is even, for the privacy regime where $(d+2)/\lfloor d/2 \rfloor < e^{\epsilon} + 1 < d / \lfloor (d-1)/2 \rfloor$, expand the subset size from $\lfloor (d-1)/2 \rfloor$ to $\lfloor d/2 \rfloor$ and save the corresponding result.
    \item[(4)] \emph{Handle high privacy regimes when $e^{\epsilon} + 1 < (d+2)/\lfloor d/2 \rfloor$ (if $d$ is even) or $e^{\epsilon} + 1 < d / \lfloor (d-1)/2 \rfloor$ (if $d$ is odd)}: Sequentially generate the corresponding result from the previous result for the specific $\epsilon_{i+1}$ using only the \emph{rescale} operator.
    \item[(5)] \emph{Match privacy list to results}: Given the generated results $R$, for each target level $\epsilon_i \in L$, find the closest result that satisfies $\epsilon_i$ and set that result as $r_i$.
\end{itemize}

\begin{algorithm}[t]
\caption{Multi-tier Subset Mechanism $Subset^L(x)$}
\label{alg:multitiersubset}
\KwIn{Privacy list $L=\{\epsilon_1,\ldots,\epsilon_m\}$; the true value $x\in [d]$}
\KwOut{Query results $\{r_{i}\}_{i\in [m]}$ each with $\epsilon_{i}$-LDP}

$R\leftarrow\{\}$;\ \ 
$s\leftarrow \{x\}$;\ \ 
$\rho \leftarrow +\infty$\;

\tcp{\color{gray}Step 1: handle cases when $e^\epsilon_i+1> d$}
\For{$\epsilon_i$ in $L$ where $e^{\epsilon_i}+1> d$}
{
$s \leftarrow \rescale\left(s,\frac{(e^{\epsilon_{i}}-1)(|s| \rho + d - |s|)}{d(\rho - e^{\epsilon_{i}})}\right)$;\ \ $\rho \leftarrow e^{\epsilon_i}$\;
append $(\epsilon_i,s)$ to $R$\;
}

\tcp{\color{gray}Step 2: generate middle templates}
$R',s,\rho=$TeampleGen$(s,\rho)$\;
append elements in $R'$ to $R$\;

\tcp{\color{gray}Step 3: handle cases when $d$ is even}
{
    $\rho \leftarrow \frac{\rho|s| + 1}{|s| + 1}$;\ \ $s \leftarrow \expansion(s)$\;
    append $(\ln(\rho),s)$ to $R$\;
}

\tcp{\color{gray}Step 4: cases when $e^{\epsilon_{i}}+1< (d+2)/\lfloor d/2\rfloor$ (if $d$ is even) or $e^{\epsilon_{i}}+1< d/\lfloor (d-1)/2\rfloor$ (if $d$ is odd)}
\For{$\epsilon_i$ in $L$ where $e^{\epsilon_i}<\rho$}
{

$s \leftarrow \rescale\left(s, \frac{(e^{\epsilon_{i}}-1)(|s| \rho + d - |s|)}{d(\rho - e^{\epsilon_{i}})}\right)$;\ \ $\rho \leftarrow e^{\epsilon_i}$\;
append $(\epsilon_i,s)$ to $R$\; 
}

\tcp{\color{gray}Step 5: match privacy list to results}
$\{r_1,\ldots,r_m\}\leftarrow$TemplateMatch$(R,L)$

\Return{$\{r_1,\ldots,r_m\}$}
\end{algorithm}

\begin{algorithm}[t]
\caption{TemplateGen($s,\rho$)}
\label{alg:templategen}
\KwIn{Previous result $s$ satisfying $\ln(\rho)$-LDP and $|s|=1$ with $\rho+1>d$.}
\KwOut{Templates $\{r'_k\}_{k\in[1,\lfloor(d-1)/2\rfloor]}$ each satisfies $\ln(d/k-1)$-LDP.}

$R'\leftarrow\{\}$

\For{$k\in [1, \lfloor (d-1)/2\rfloor]$}{
 \If{$k>|s|$}{
    \tcp{\color{gray}expand the subset size by $1$}
    $\rho \leftarrow \frac{\rho|s| + 1}{|s| + 1}$; $s\leftarrow \expansion(s)$\;   
}
\tcp{\color{gray}rescale the probability ratio/budget}
$\epsilon'_k\leftarrow \ln(d/k-1)$\;
$s \leftarrow \rescale\left(s,\frac{(e^{\epsilon'_{k}}-1)(k \rho + d - k)}{d(\rho - e^{\epsilon'_{k}})}\right)$; $\rho \leftarrow e^{\epsilon'_k}$\;
append $(\epsilon'_k,s)$ to $R'$\;
}
\Return{$R',s,\rho$}
\end{algorithm}

\begin{algorithm}[h]
\caption{TemplateMatch($R,L$)}
\label{alg:templatematch}
\KwIn{Sorted templates $R$; privacy list $L=\{\epsilon_1,\ldots,\epsilon_m\}$}
\KwOut{Results $\{r_i\}_{i\in[1,m]}$ with $r_i$ satisfies $\epsilon_i$-LDP\;}

\tcp{\color{gray}$R[l].\epsilon$: the $l$-th template's level}
\tcp{\color{gray}$R[l].s$: the $l$-th template's result}


\For{$i \leftarrow 1$ \KwTo $m$}{
    $l_i \leftarrow \arg\min_{l\in [|R|]} \{R[l].\epsilon \leq \epsilon_i\}$\;
    $r_i \leftarrow R[l_i].s$\;
}

\Return{$\{r_1,\ldots,r_m\}$}
\end{algorithm}

We depict an example of the procedure in Figure \ref{fig:multitiersubset} (denoted as multi-tier Subset$_1^L$). We note that each template $\{r'_k\}_{k}$ satisfies exactly $\ln(d/k - 1)$-LDP and is exactly optimal at the corresponding privacy level. These templates help provide desirable results (with sufficiently close optimal subset size and close privacy level) in the middle privacy regime. For privacy levels in the low and high privacy regimes, the mechanism exactly matches the optimal subset mechanism. We provide the overall utility-match (see Definition \ref{def:utilitymatch}) guarantee of the multi-tier subset mechanism in Theorem \ref{the:utilitymatch}. Theoretically, the multiplicative factor is upper bounded by 171/11 (see Appendix \ref{app:utilitymatch}). Numerically, its value is relatively small in most cases (e.g., $\leq 3$), according to Figure \ref{fig:heatmap} (left part).

\begin{figure*}[t]
\centering
\includegraphics[width=0.99\linewidth]{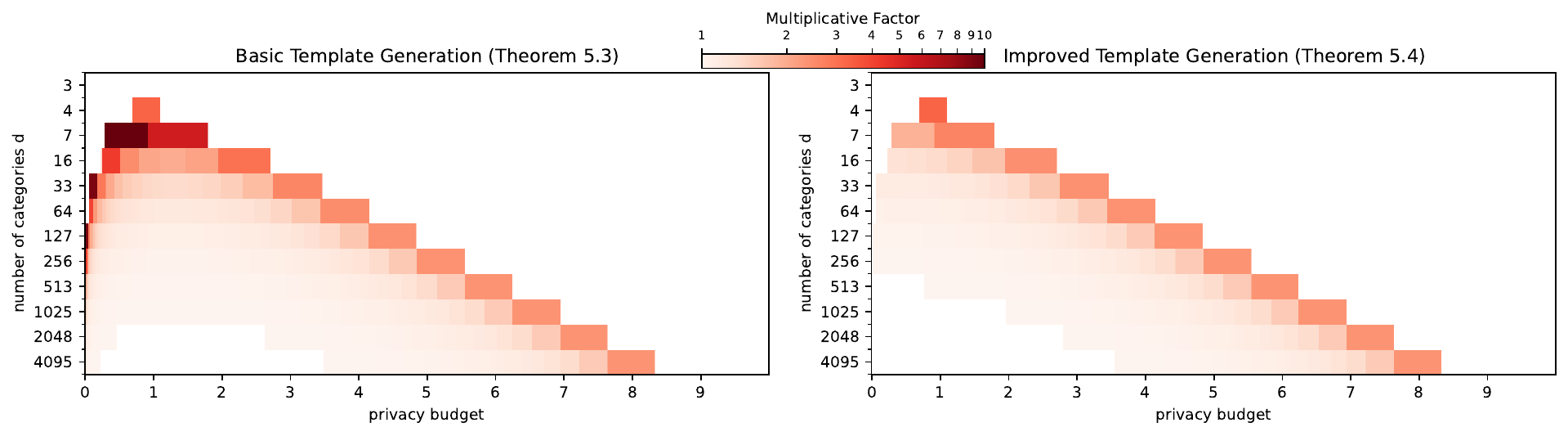}
\vspace*{-1em}
\caption{Heatmap of theoretical multiplicative factor upper bounds of Theorem \ref{the:utilitymatch} and Theorem \ref{the:utilitymatch2}, under varying categories $d$ and privacy budget $\epsilon_i$ (grid values from $0.01$ to $10$ with granularity $0.01$).}
\label{fig:heatmap}
\end{figure*}

\begin{theorem}[Utility of Multi-tier Subset Mechanism]\label{the:utilitymatch}
For any $\epsilon_i\in L$ in the multi-tier subset mechanism $Subset^{L}(x)$, let $k^*_i$ denote the optimal subset size $\arg\min_{k\in [1,\ldots,\lfloor d/2\rfloor]} V(\epsilon_i,k)$, the correspond result $Subset^{L}(x)_i=r_i$ satisfies:
\begin{alignat*}{2}
&\frac{\mathbb{E}[\|Subset^{L}(x)_i-x\|_2^2]}{V(\epsilon_i,k^*_i)}\\
\leq&\max\!\left\{\frac{V\left(\ln\left(\frac{d}{\overline{k}_i}-1\right), \overline{k}_i\right)}{V\left(\ln\left(\frac{d}{\overline{k}_i-1}-1\right), \overline{k}_i-1\right)},\frac{V\left(\ln\left(\frac{d+2}{\lfloor d/2\rfloor}-1\right),\lfloor \frac{d}{2}\rfloor\right)}{V\left(\ln\left(\frac{d}{\lfloor (d-1)/2\rfloor}-1\right),\lfloor \frac{d-1}{2}\rfloor\right)}\right\}.
\end{alignat*}
\end{theorem}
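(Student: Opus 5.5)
The plan is to split according to which branch of Algorithm \ref{alg:multitiersubset} produces the result matched to $\epsilon_i$ in TemplateMatch. By Lemmas \ref{lemma:expansion} and \ref{lemma:rescale}, every entry $(\epsilon,s)$ in $R$ is distributed exactly as $Subset(x,|s|,\epsilon)$. Hence $\mathbb{E}[\|r_i-x\|_2^2]=V(\epsilon^{(l_i)},k^{(l_i)})$, where $(\epsilon^{(l_i)},k^{(l_i)})$ are the level and subset size of the matched template. The whole task is therefore to compare this variance with $V(\epsilon_i,k_i^*)$. I will use two monotonicity facts about $V$, both checkable from the closed form.

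\textbf{Fact (a).} For fixed $k$, $V(\epsilon,k)$ is decreasing in $\epsilon$.

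\textbf{Fact (b).} For fixed $\epsilon$, $V(\epsilon,k)$ is unimodal in $k$. The optimal $k^*(\epsilon)$ is non-increasing in $\epsilon$ and equals $k$ on an interval containing $\ln(d/k-1)$; the switch from $k-1$ to $k$ happens between $\ln(d/k-1)$ and $\ln(d/(k-1)-1)$.

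\textbf{Exact regimes.} There are two regimes where the matched template is exact.
\begin{itemize}
\item In Step 1 ($e^{\epsilon_i}+1>d$) we have $k_i^*=1$, and the result is exactly $Subset(x,1,\epsilon_i)$.
\item In Step 4 ($\epsilon_i$ below the last expansion level) we have $k_i^*=\lfloor d/2\rfloor$, and the result is $Subset(x,\lfloor d/2\rfloor,\epsilon_i)$ exactly.
\end{itemize}
In both cases the ratio is $1$, which is at most either term of the max, since each term exceeds $1$ by Fact (a).

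\textbf{Middle regime.} Suppose $\epsilon_i\in[\ln(d/\overline{k}_i-1),\ln(d/(\overline{k}_i-1)-1))$, where $\overline{k}_i$ is the index of the matched template. The matched result has variance $V(\ln(d/\overline{k}_i-1),\overline{k}_i)$.

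For the denominator, I lower bound $V(\epsilon_i,k_i^*)$ using Fact (b). On this interval $k_i^*\in\{\overline{k}_i-1,\overline{k}_i\}$. For each such $k$, Fact (a) gives
\[V(\epsilon_i,k)\ge V\bigl(\ln(d/(\overline{k}_i-1)-1),k\bigr)\ge V\bigl(\ln(d/(\overline{k}_i-1)-1),\overline{k}_i-1\bigr),\]
where the last inequality holds because $\overline{k}_i-1$ is optimal at its own reference level. This yields the first term of the max.

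\textbf{Step 3 regime.} When $d$ is even and $\epsilon_i$ lies in the Step 3 window, the matched result is the expanded template at level $\ln((d+2)/\lfloor d/2\rfloor-1)$ with size $\lfloor d/2\rfloor$. The numerator is the stated one. For the denominator I bound $V(\epsilon_i,k_i^*)$ below by the optimal variance at the upper end of the window, namely $V(\ln(d/\lfloor (d-1)/2\rfloor-1),\lfloor (d-1)/2\rfloor)$. This gives the second term.

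\textbf{Main obstacle.} The main difficulty is rigorously establishing Fact (b). Specifically, I need that on each interval between consecutive reference levels the minimizer over integer $k$ lies in $\{\overline{k}_i-1,\overline{k}_i\}$, and that $\overline{k}_i-1$ is optimal at level $\ln(d/(\overline{k}_i-1)-1)$. I would first treat $k$ as continuous and show that $\partial_k V$ changes sign once, with the real minimizer near $d/(e^\epsilon+1)$. I would then compare $V(\epsilon,k)$ with $V(\epsilon,k+1)$ directly through the polynomial difference. If exact optimality at reference levels fails for some $d$, a fallback is to bound the denominator by $\min_k V\bigl(\ln(d/(\overline{k}_i-1)-1),k\bigr)$. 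This weakens the statement but follows the same structure.
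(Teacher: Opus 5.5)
Your decomposition matches the paper's proof: Step~1 and Step~4 exact, the middle regime bounded by consecutive templates, and the even-$d$ Step~3 window bounded by the second term. Your middle-regime and Step~3 arguments are sound. Facts~(a) and~(b) actually supply the justification the paper leaves out for its two bracketing inequalities.

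Fact~(b) is not a real obstacle. Write $V(\epsilon,k)=\frac{d-1}{(e^\epsilon-1)^2}G_k(e^\epsilon)$ with $G_k(E)=\frac{(k-1)E^2}{d-k}+2E+\frac{d-k-1}{k}$. Then $V(\epsilon,k)-V(\epsilon,k+1)$ has the sign of $(d-k)(d-k-1)-e^{2\epsilon}k(k+1)$, which is strictly decreasing in $k$. So $V$ is unimodal in $k$, and the switch from $k$ to $k+1$ sits at $e^\epsilon=\sqrt{(d/k-1)(d/(k+1)-1)}$, strictly between consecutive reference values. No fallback is needed.

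The genuine gap is your claim that Step~4 is exact for even $d$. The paper's Case~4 asserts the same thing, so you inherited it, but it is false.

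\emph{Why it fails.} Put $K=d/2$ and $E_0:=(d+4)/d$, the level left by the Step~3 expansion; Step~4 handles $e^{\epsilon_i}<E_0$. By your own Fact~(b), the switch from $K-1$ to $K$ happens at $e^\epsilon=\sqrt{(d+2)/(d-2)}$. But $E_0^2-\frac{d+2}{d-2}=\frac{4d^2-32}{d^2(d-2)}>0$ for every even $d\ge 4$. So for $e^{\epsilon_i}\in\bigl(\sqrt{(d+2)/(d-2)},E_0\bigr)$ the optimum is $k_i^*=K-1$, while the output is $Subset(x,K,\epsilon_i)$. The ratio is then $V(\epsilon_i,K)/V(\epsilon_i,K-1)>1$, not $1$. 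For instance, $d=4$ and $e^{\epsilon_i}=1.9$ give $G_2/G_1=6.105/5.8\approx 1.05$.

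\emph{The fix.} Write
$V(\epsilon,K)/V(\epsilon,K-1)=G_K(E)/G_{K-1}(E)=1+(d-1)\frac{E^2/(K(K+1))-1/(K(K-1))}{G_{K-1}(E)}$.
This is increasing in $E$: the derivative of $\frac{aE^2-b}{cE^2+2E+h}$ with $a,b,c,h\ge 0$ has numerator $2aE^2+2(ah+bc)E+2b>0$. So on the window the ratio is at most $V(\ln E_0,K)/V(\ln E_0,K-1)$. Since $E_0<\frac{d+2}{d-2}$, Fact~(a) lowers the denominator to $V(\ln\frac{d+2}{d-2},K-1)$. This gives exactly the second term of the max.

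There is also a minor slip in your exact-regime remark. The second term is not always above $1$: for odd $d$ both sizes equal $(d-1)/2$ and the numerator sits at the higher level, so it is below $1$. The first term is $\ge 1$ only by combining Facts~(a) and~(b), not by Fact~(a) alone. Since only the max needs to be $\ge 1$, this does no harm.
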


Since the original subset mechanism is optimal in all privacy regimes and the multi-tier subset mechanism approximates the best single-tier subset mechanism up to a constant factor, the multi-tier subset mechanism is order-optimal for all privacy regimes.

\subsection{Improving Template Generation}
The template generation algorithm can be further improved. The improvement exploits the gap in Equation \ref{eq:expansionnext}, where the value of $k \cdot (\rho + 1)$ increases by \emph{two} after each expansion operation. This implies that additional templates can be generated for free within the gap regions. For example, after each expansion operation for $k \in [1, \lfloor (d-1)/2 \rfloor]$, one can further use the rescale operation for: $$\epsilon_i \in \left(\min\{\rho, \ln((d+1)/k - 1)\}, \ln((d-1)/k - 1)\right]$$
to generate extra templates. These extra templates can be closer to the privacy levels in $L$ and thus improve utility over Algorithm \ref{alg:multitiersubset}. We summarize the new template generation process in Algorithm \ref{alg:templategen2}, which takes as input the previous result and the privacy list $L$.

\begin{algorithm}[t]
\caption{TemplateGen2($s,\rho, L$)}
\label{alg:templategen2}
\KwIn{Previous result $s$ satisfying $\ln(\rho)$-LDP and $|s|=1$; a list of privacy budget $L$}
\KwOut{A list of template randomization results}

$L'\leftarrow L$\;

\tcp{\color{gray}add more reference privacy levels}
\For{$k\in [1, \lfloor (d-1)/2\rfloor]$}{
    add $\ln(d/k-1)$, $\ln((d+1)/k-1)$ to $L'$
}

sort $L'$ in descending order

$R'\leftarrow\{\}$

\For{$k\in [1, \lfloor (d-1)/2\rfloor]$}{

     \If{$k>|s|$}{
        $\rho \leftarrow \frac{\rho|s| + 1}{|s| + 1}$; $s \leftarrow \expansion(s)$\;
        append $(\log(\rho),s)$ to $R'$\; 
     }

    \For{$\epsilon_i\in L'$ that $\epsilon_i\in [\ln((d-1)/k-1, \ln(\rho)))$}
    {
        \If{$k\neq\lfloor (d-1)/2\rfloor$ or $\epsilon_i\geq \ln(d/\lfloor (d-1)/2\rfloor-1)$}
        {
        $s\leftarrow \rescale\left(s,\frac{(e^{\epsilon_{i}}-1)(k \rho + d - k)}{d(\rho - e^{\epsilon_{i}}}\right)$\;
        $\rho \leftarrow e^{\epsilon_i}$\;
        append $(\epsilon_i,s)$ to $R'$\; 
        }
    }
}
\Return{$R',s,\rho$}
\end{algorithm}

We present the utility match property of the multi-tier Subset mechanism with improved template generation in Theorem \ref{the:utilitymatch2}. It reduces the multiplicative factor of utility matching guarantee compared to Theorem \ref{the:utilitymatch}. Figure \ref{fig:heatmap} demonstrates the numerical values of the multiplicative factor bounds under various $d$ and $\epsilon_I$, with comparison to Theorem \ref{the:utilitymatch}. In extensive settings, the improved template generation could limit the upper bound of  multiplicative factors in a desirable range (e.g. $<2$).

\begin{theorem}[Utility of Multi-tier Subset Mechanism with Improved Template Generation]\label{the:utilitymatch2}
For any $\epsilon_i\in L$ in the multi-tier subset mechanism $Subset_2^{L}(x)$ with template generation algorithm \ref{alg:templategen2}, let $k^*_i$ denote the optimal subset size $\arg\min_{k\in [1,\ldots,\lfloor d/2\rfloor]} V(\epsilon_i,k)$, $\overline{k}_i$ denote $\max\{2,\lceil d/(e^\epsilon_i+1)\rceil\}$, the result $Subset_2^{L}(x)_i=r_i$ satisfies:
\begin{alignat*}{2}
&\frac{\mathbb{E}[\|Subset_2^{L}(x)_i-x\|_2^2]}{V(\epsilon_i,k^*_i)}\\
\leq& \max\left\{\frac{V\left(\ln\left(\frac{d+1}{\overline{k}_i}-1\right), \overline{k}_i\right)}{V\left(\ln\left(\frac{d-1}{\overline{k}_i-1}-1\right), \overline{k}_i-1\right)},\frac{V\left(\ln\left(\frac{d+2}{\lfloor d/2\rfloor}-1\right),\lfloor \frac{d}{2}\rfloor\right)}{V\left(\ln\left(\frac{d}{\lfloor (d-1)/2\rfloor}-1\right),\lfloor \frac{d-1}{2}\rfloor\right)}\right\}.
\end{alignat*}
\end{theorem}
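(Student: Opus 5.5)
The plan is to fix $\epsilon_i\in L$ and split by the regime of $\epsilon_i$, following the branches of Algorithm \ref{alg:multitiersubset} with TemplateGen2 in place of TemplateGen.

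\textbf{Step 1: regimes where the ratio is $1$.} In the low-privacy regime ($e^{\epsilon_i}+1>d$), Step 1 produces an output with $|s|=1$ at exactly level $\epsilon_i$. By Lemma \ref{lemma:rescale}, this output is distributed as $Subset(x,1,\epsilon_i)$. Since $k_i^*=1$ there, the error equals $V(\epsilon_i,k_i^*)$. In the high-privacy regime (Step 4), the same lemma shows the output is exactly $Subset(x,\lfloor d/2\rfloor,\epsilon_i)$, which is optimal there. In both cases the ratio is $1$.

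\textbf{Step 2: the middle regime.} By Lemmas \ref{lemma:expansion} and \ref{lemma:rescale} and induction over the loop of TemplateGen2, I will show that every template appended to $R'$ is distributed as $Subset(x,k,\epsilon)$ with its recorded level $\epsilon$. The rescale parameter lies in $[0,1]$ because, by Equation \ref{eq:expansionnext}, each expansion yields $\rho$ at least the next reference level. The templates are sorted in decreasing $\epsilon$.

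For $\epsilon_i$ in the middle regime, TemplateMatch returns the largest template level $\tilde\epsilon\le\epsilon_i$, with some size $\tilde k$, so the error is $V(\tilde\epsilon,\tilde k)$. Write $\bar k=\bar k_i$. By construction, the added reference levels $\ln(d/k-1)$ and $\ln((d+1)/k-1)$, together with $\epsilon_i\in L'$ itself, force $\tilde\epsilon=\epsilon_i$ whenever $\epsilon_i$ falls in a rescale window. Otherwise $\epsilon_i$ lies in a gap just after an expansion, and then $\tilde\epsilon\ge\ln((d+1)/\bar k-1)$ with $\tilde k=\bar k$. In every case I expect
\[
V(\tilde\epsilon,\tilde k)\le V\!\left(\ln\!\left(\tfrac{d+1}{\bar k}-1\right),\bar k\right),
\]
using monotonic decrease of $V$ in $\epsilon$ for fixed $k$. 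This monotonicity follows from the closed form: the numerator is a quadratic in $e^\epsilon$ over $(e^\epsilon-1)^2$, and I will check its derivative sign.

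For the denominator, $\epsilon_i\le\ln((d-1)/(\bar k-1)-1)$, because $\bar k=\lceil d/(e^{\epsilon_i}+1)\rceil$ implies $e^{\epsilon_i}+1\le d/(\bar k-1)$; I need the slightly sharper endpoint $(d-1)/(\bar k-1)$, which comes from which gap we are in. Hence
\[
V(\epsilon_i,k_i^*)=\min_k V(\epsilon_i,k)\ge\min_k V\!\left(\ln\!\left(\tfrac{d-1}{\bar k-1}-1\right),k\right).
\]
I then argue that the optimal size at that reference level is $\bar k-1$ or its neighbor. This is the step where the lower bound must be attributed to the single size $\bar k-1$, and I would verify it by comparing $V(\epsilon,k)$ with $V(\epsilon,k\pm1)$ via the closed form, as in the optimality analyses of \cite{wang2019local}.

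\textbf{Step 3: the boundary band.} For $(d+2)/\lfloor d/2\rfloor<e^{\epsilon_i}+1<d/\lfloor(d-1)/2\rfloor$ with $d$ even, the output is the expanded $\lfloor d/2\rfloor$-subset at level $\ln((d+2)/\lfloor d/2\rfloor-1)$. The optimum is at least $V(\ln(d/\lfloor(d-1)/2\rfloor-1),\lfloor(d-1)/2\rfloor)$, by the same monotonicity and optimal-size argument. This gives the second term of the max.

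\textbf{Main obstacle.} The delicate parts are the careful case analysis of which gap $\epsilon_i$ falls into, so that it is matched to a template with size $\bar k$ and level at least $\ln((d+1)/\bar k-1)$, and the lower bound identifying the optimizer at the endpoint level. I would handle the latter by showing that $k\mapsto V(\epsilon,k)$ is unimodal, with minimizer near $d/(e^\epsilon+1)$.
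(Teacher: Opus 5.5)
Your treatment of the extreme regimes and of the even-$d$ band matches the paper, but Step~2 has a genuine gap. The uniform sandwich fails on parts of the rescale windows, in both directions. The sandwich in question bounds the numerator above by $V(\ln(\tfrac{d+1}{\bar k}-1),\bar k)$ and the denominator below by $V(\ln(\tfrac{d-1}{\bar k-1}-1),\bar k-1)$.

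\textbf{Numerator.} Take $\epsilon_i\in[\ln(\tfrac{d}{\bar k}-1),\ln(\tfrac{d+1}{\bar k}-1))$. Then $\epsilon_i$ lies in the size-$\bar k$ window, so $\tilde\epsilon=\epsilon_i$ sits \emph{below} $\ln(\tfrac{d+1}{\bar k}-1)$. Since $V$ decreases in $\epsilon$, this gives $V(\tilde\epsilon,\tilde k)>V(\ln(\tfrac{d+1}{\bar k}-1),\bar k)$. A concrete case is $d=10$, $e^{\epsilon_i}=4.2$.

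\textbf{Denominator.} Take $\epsilon_i\in(\ln(\tfrac{d-1}{\bar k-1}-1),\ln(\tfrac{d}{\bar k-1}-1))$. Then $\epsilon_i$ lies in the size-$(\bar k-1)$ window, so the output has size $\bar k-1$ rather than $\bar k$, contrary to your final paragraph. That size is optimal at $\epsilon_i$, so $V(\epsilon_i,k_i^*)=V(\epsilon_i,\bar k-1)<V(\ln(\tfrac{d-1}{\bar k-1}-1),\bar k-1)$, and your denominator bound is false. A concrete case is $d=10$, $e^{\epsilon_i}=8.5$, where $\bar k=2$ but the output has size $1$. Knowing $\tilde\epsilon=\epsilon_i$ on a window does not by itself bound the ratio.

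The missing ingredient, which is the heart of the paper's proof, is exact optimality on those windows. This comes from the crossover level $\epsilon_{(k)}=\ln\sqrt{\tfrac{d-k}{k}\cdot\tfrac{d-k-1}{k+1}}$, at which $V(\epsilon,k)=V(\epsilon,k+1)$.

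\begin{itemize}
\item The sign of $V(\epsilon,k)-V(\epsilon,k+1)$ equals that of $d^2-d(2k+1)-(e^{2\epsilon}-1)k(k+1)$. This is decreasing in $k$, which gives the unimodality you want.
\item Since $e^{2\epsilon_{(k)}}=\tfrac{d-k}{k+1}\cdot\tfrac{d-k-1}{k}$, one has $\ln(\tfrac{d+1}{k+1}-1)\le\epsilon_{(k)}\le\ln(\tfrac{d-1}{k}-1)$ whenever $k\le(d-1)/2$.
\item One also has the trivial bounds $\ln(\tfrac{d}{k+1}-1)\le\epsilon_{(k)}\le\ln(\tfrac{d}{k}-1)$.
\end{itemize}

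Apply these at $k=\bar k-2,\bar k-1,\bar k$. They show the matched size is exactly optimal whenever $\epsilon_i\le\ln(\tfrac{d+1}{\bar k}-1)$ or $\epsilon_i\ge\ln(\tfrac{d-1}{\bar k-1}-1)$, so the ratio is $1$. These are the paper's subcases (I)--(III). The same facts identify $\bar k-1$ as the optimizer at level $\ln(\tfrac{d-1}{\bar k-1}-1)$, which settles the step you flagged. They also show the right-hand side of the theorem is at least $1$.

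Your sandwich is then needed, and valid, only for $\epsilon_i\in(\ln(\tfrac{d+1}{\bar k}-1),\ln(\tfrac{d-1}{\bar k-1}-1))$, which is the paper's subcase (IV). Split by these fixed thresholds rather than by window versus gap. A size-$\bar k$ window can extend above $\epsilon_{(\bar k-1)}$, for example $d=10$, $\bar k=4$, $e^{\epsilon_i}=1.95$. There the matched size is not optimal, but such levels lie in subcase (IV), and the sandwich covers them.
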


\textbf{Computation and memory complexity.} For the multi-tier algorithm using either the template generation procedure in Algorithm \ref{alg:templategen} or \ref{alg:templategen2}, the resulting set of templates $R$ contains at most $m + \lfloor d/2 \rfloor$ (sorted) elements. Additionally, the $\textsf{TemplateMatch}$ procedure, which takes two sorted lists as input, has computational and memory complexity of $O(|L| + |R|)$. Therefore, the overall computational and memory complexity of $\textsf{Subset}^L(x)$ is $O(|L| + d)$.

\section{Experiments}\label{sec:exp}
In this section, we present numerical experimental results of our multi-tier DP proposals, with comparison to the following baseline:

\textbf{The gradual release baseline.} Recall the \emph{gradual} release method: (i) release an intermediate result for $\epsilon_m$ using a corresponding base mechanism; (ii) for $j \in [1, m-1]$, further release an intermediate result under budget $\epsilon_j - \epsilon_{j+1}$ using the base mechanism; (iii) a weighted sum of the first $m-j+1$ intermediate results is used as the query result for $\epsilon_j$. Specifically, each intermediate result is weighted inversely to its variance, so as to minimize the variance of the weighted-sum result \cite{yiwen2018utility}.

\textbf{Privacy list.} We consider both grid and random privacy levels. The \emph{grid} cases assume that the privacy list $L$ is uniformly separated by $\alpha$ (denoted as $\text{grid}(\alpha)$), i.e., $L = \{m\alpha, (m-1)\alpha, \dots, \alpha\}$. We vary $\alpha$ from $0.1$ to $0.8$ and $m$ from $1$ to $20$.  The \emph{random} cases include budgets uniformly sampled from $[0.01, C]$ (denoted as $\text{uniform}(C)$) and Gaussian-sampled random budgets centered at $\mu$ with variance $\rho^2$ (denoted as $\text{normal}(\mu, \rho^2)$), where values below $0.01$ are omitted. These settings cover many scenarios where privacy budgets vary across high and low regimes.

\textbf{Metrics.} We evaluate the utility of query results by the mean squared error $\mathbb{E}[\|r_i - q(X)\|_2^2]$. Specifically, we use both the total mean squared error (\emph{Total MSE}) defined as $\sum_{i \in [m]} \mathbb{E}[\|r_i - q(X)\|_2^2]$ and the maximum MSE ratio among all tiers (\emph{Max MSE Ratio}):
\[
\max_{i \in [m]} \frac{\mathbb{E}[\|r_i - q(X)\|_2^2]}{\mathbb{E}[\|\mathcal{M}^*(X, \epsilon_i) - q(X)\|_2^2]},
\]
where $\mathcal{M}^*$ denotes the optimal  single-tier mechanism (e.g., the base two-sided Geometric and Subset mechanisms for $\epsilon_i$ in our context). The total MSE is often dominated by the error of low-budget results, and the maximum MSE ratio cares more about individual excess error compared to solo query.

\subsection{Multi-tier DP Count Queries with Noise-adding Mechanisms}

Consider the most fundamental count query with sensitivity $\Delta = 1$, for which both the Laplace and two-sided Geometric mechanisms are applicable. We compare their multi-tier versions with the gradual baseline. In Figures \ref{fig:count} and \ref{fig:countrandom}, we present the MSE results for tier $1$ under grid privacy lists and the maximum MSE ratio results under random privacy lists, respectively. Our multi-tier two-sided Geometric mechanism outperforms the gradual approach by a large margin, as the gradual approach splits the budget into multiple parts and introduces more noise. As expected, the multi-tier two-sided Geometric mechanism also achieves significantly lower MSE than the multi-tier Laplace mechanism, especially when the privacy budgets are relatively large.

\begin{figure}[t]
\centering
\includegraphics[width=0.99\linewidth]{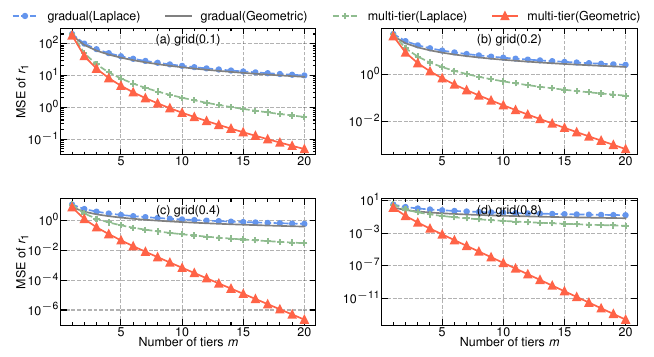}
\vspace*{-1.1em}
\caption{MSE result of count queries under grid budgets.}
\label{fig:count}
\end{figure}

\begin{figure}[t]
\centering
\includegraphics[width=0.99\linewidth]{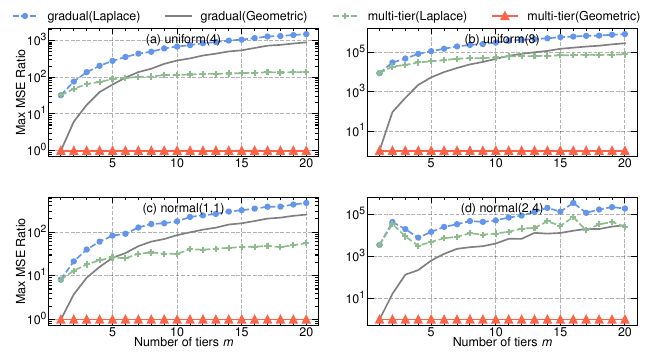}
\vspace*{-1.1em}
\caption{Maximum MSE result of count queries under random budgets (averaged over $500$ simulations).}
\label{fig:countrandom}
\end{figure}





\subsection{Multi-tier LDP Count Queries with the Subset Mechanism}

Considering the release of categorical data $x \in [d]$ using the subset mechanism, we compare the proposed multi-tier subset mechanism with the gradual baseline (using the subset mechanism as the base mechanism) and an ideal mechanism (denoted as \emph{ideal}): each $r_i$ for level $\epsilon_i$ is released by the optimal subset mechanism $\text{Subset}(x, k^*_i, \epsilon_i)$, where $k_i^*$ is the optimal subset size for $\epsilon_i$. Note that the \emph{ideal} mechanism violates the collusion-resistance property; we use it as an MSE lower bound to demonstrate the excess utility loss due to coordination in our multi-tier proposal. We also compare it with the  multi-tier binary randomized response \cite{duchi2018minimax}  (denoted as \emph{BRR}) and generalized randomized response \cite{kairouz2016discrete} (denoted as \emph{GRR}) mechanisms, using the re-sampling technique presented in \cite{xiao2009optimal}. We denote our proposal using the template generation procedure in Algorithm \ref{alg:templategen} as $\textit{multi-tier, Subset}^L_1$, and the one using Algorithm \ref{alg:templategen2} as $\textit{multi-tier, Subset}^L_2$.

\subsubsection{Total MSE results. } Under grid privacy lists, we vary the domain size and present the total MSE results in Figures \ref{fig:subset15} and \ref{fig:subset128}, respectively. In most cases, the $\textit{multi-tier, Subset}^L_1$ and $\textit{multi-tier, Subset}^L_2$  has total MSEs close to the ideal one. Since total MSE is often dominated by low-budget results, and the base BRR is comparable to the Subset mechanism in the low-budget regimes, the multi-tier BRR only has small gap to our proposals; however, when the lowest budget is relatively large (e.g., with $grid(0.8)$ privacy list), the error gap gets larger. As contrast, since GRR performs poorly in the low-budget regime, their total MSE has significant gap with our proposals in all settings. The $gradual$ approach is comparable to our proposals when $m$ is small, but the gap drastically grows as the number of tiers gets large.

\begin{figure}[t]
\centering
\includegraphics[width=0.98\linewidth]{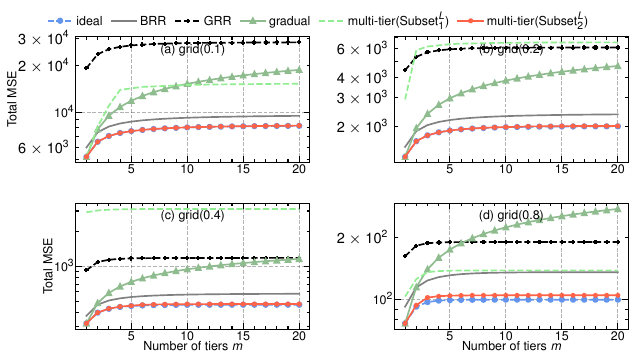}
\vspace*{-1.6em}
\caption{Total MSE result of LDP categorical queries with Subset mechanism and $d=15$, under random privacy lists.}
\label{fig:subset15}
\end{figure}

\begin{figure}[t]
\centering
\includegraphics[width=0.98\linewidth]{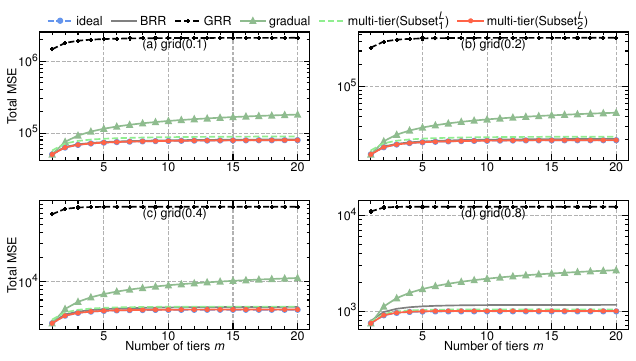}
\vspace*{-1.6em}
\caption{Total MSE result of LDP categorical queries with Subset mechanism and $d=128$, under grid privacy lists.}
\label{fig:subset128}
\end{figure}

\begin{figure}[t]
\centering
\includegraphics[width=0.98\linewidth]{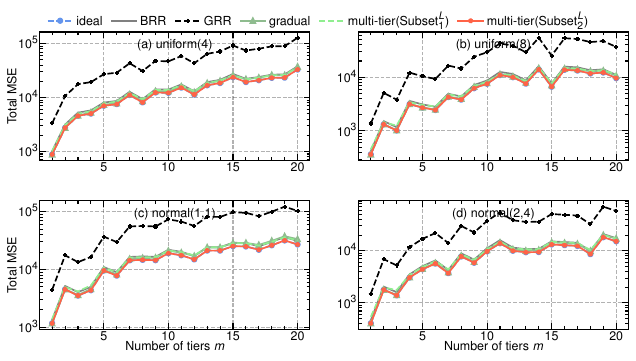}
\vspace*{-1.6em}
\caption{Total MSE result of LDP categorical queries with Subset mechanism and $d=15$, under random privacy lists.}
\label{fig:subset15random}
\end{figure}



\begin{figure}[t]
\centering
\includegraphics[width=0.98\linewidth]{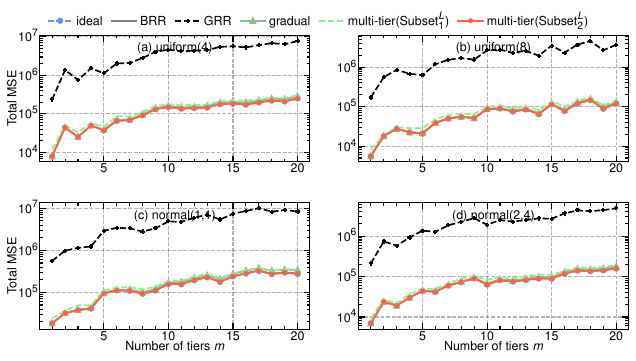}
\vspace*{-1.6em}
\caption{Total MSE result of LDP categorical queries with Subset mechanism and $d=128$, under random privacy lists.}
\label{fig:subset128random}
\end{figure}

In random privacy lists, we vary the domain size and present the total MSE results in Figures \ref{fig:subset15random} and \ref{fig:subset128random} where each result is averaged in 500 simulations. Since the error of the sampled lowest-budget result dominates total MSE, and the lowest budget can be $0.01$, both the BRR and the gradual approach is comparable to our proposals.

\subsubsection{Maximum MSE ratio results. } In grid privacy lists, we present the maximum MSE ratio results in Figures \ref{fig:subsetmaxratio15} and \ref{fig:subsetmaxratio128}, respectively. When there are relatively small number of categories (i.e., $d=15$) and the subset size parameter $k$ in the Subset has significant effects, the $\textit{multi-tier, Subset}^L_2$ has much lower maximum ratio than the $\textit{multi-tier, Subset}^L_1$ through generating more templates, and is very close to the ideal one (with the ratio controlled near to $1$). The BRR is comparable to $\textit{multi-tier, Subset}^L_2$ when all privacy budgets are relatively low (e.g., in $grid(0.1)$), but the gap are significant when the largest budget gets large. Since the base GRR has relatively large errors in the low-budget regime, it has significant gaps with the $\textit{multi-tier, Subset}^L_2$ when $\alpha\in [0.1, 0.2, 0.4]$, though the gap is relatively small in the $d=15$ \& $grid(0.8)$ setting. When there are relatively large number of categories (i.e., $d=15$), the gap between our two proposals are quite small (as neighboring subset size $k$ has small error effects in Subset), and the maximum ratios of both proposals are below $2$; the base GRR will always has large gap with our proposals, as base GRR performs much worse than Subset when $d$ is relative large even under large budget (e.g., $1<\epsilon<\log(d)$).

\begin{figure}[t]
\centering
\includegraphics[width=0.98\linewidth]{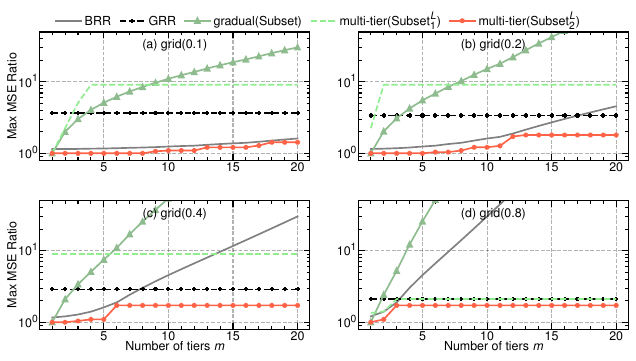}
\vspace*{-1.6em}
\caption{Maximum MSE ratio result of LDP categorical queries with Subset mechanism and $d=15$.}
\label{fig:subsetmaxratio15}
\end{figure}

\begin{figure}[t]
\centering
\includegraphics[width=0.98\linewidth]{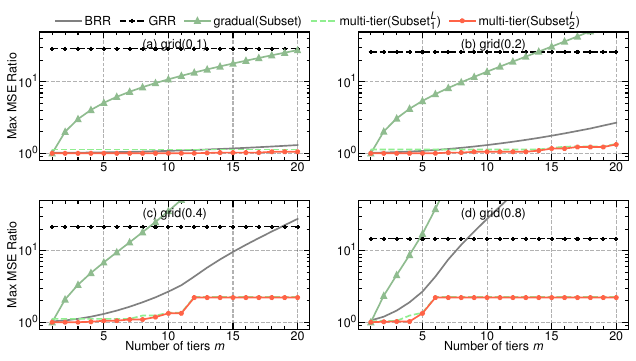}
\vspace*{-1.5em}
\caption{Maximum MSE ratio result of LDP categorical queries with Subset mechanism and $d=128$.}
\label{fig:subsetmaxratio128}
\end{figure}

In random privacy lists, we present maximum MSE ratio results in Figures \ref{fig:subsetmaxratiorandom15} and \ref{fig:subsetmaxratiorandom128}. The $\textit{multi-tier, Subset}^L_2$ is able to keep a low maximum ratio at $\leq 1.4$, and the $\textit{multi-tier, Subset}^L_1$ has relatively large gap with it when the number of categories is low (i.e., $d=15$). As expected, the gap between BRR and $\textit{multi-tier, Subset}^L_2$ is moderate when the mean budget is relatively small (e.g., in normal(1,1) and uniform(4)), and the gap grows with the mean privacy budget. As contrast, the gap with GRR and  $\textit{multi-tier, Subset}^L_2$ decreases when the mean budget gets large (e.g., in uniform(8) and normal(2,4)) and the number of categories is relatively small (i.e., $d=15$).

\begin{figure}[h]
\centering
\includegraphics[width=0.98\linewidth]{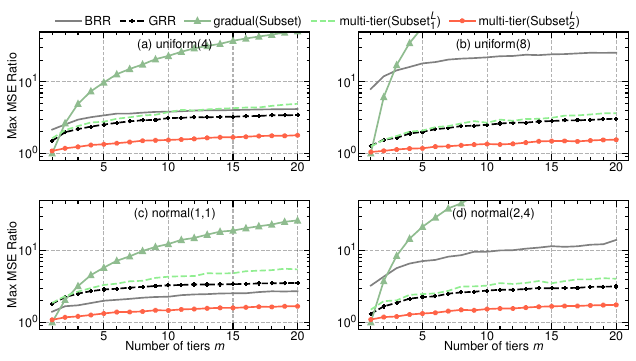}
\vspace*{-1.5em}
\caption{Maximum MSE ratio result of LDP categorical queries with Subset mechanism and $d=15$ under random privacy lists (averaged over 500 simulations).}
\label{fig:subsetmaxratiorandom15}
\end{figure}



\begin{figure}[h]
\centering
\includegraphics[width=0.98\linewidth]{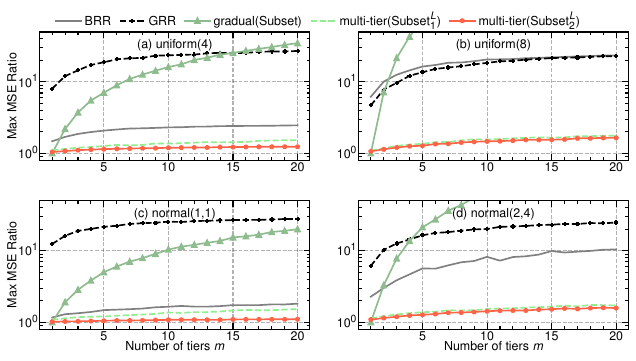}
\vspace*{-1.5em}
\caption{Maximum MSE ratio result of LDP categorical queries with Subset mechanism and $d=128$ under random privacy lists (averaged over 500 simulations).}
\label{fig:subsetmaxratiorandom128}
\end{figure}

In summary, since our multi-tier Subset design tightly approximates its optimal hyper-parameters (i.e., subset size, privacy level)  as in the single-tier Subset mechanism for all privacy levels through template generation, our proposals outperform baselines with a clear margin (both in total MSE and maximum MSE ratios).  

\section{Related Works}\label{sec:related}
\textbf{Differentially private query mechanisms. } Depending on the query types (e.g., discrete/continuous, one/multiple dimension, centralized/decentralized), plenty of base DP mechanisms are available in the literature. In the centralized curator model, for count and numerical summation queries, the Laplace mechanism \cite{dwork2006differential} and the Gaussian mechanism \cite{dwork2008differential} provide pure and approximate DP, respectively; the two-sided Geometric mechanism \cite{ghosh2012universally} further improves error rates for count query, and is universally optimal under sensitivity $\Delta=1$; the staircase mechanism \cite{geng2015staircase} provides optimal utility for numerical summation queries; for selection queries (e.g., selecting the candidate with the maximum votes), the exponential mechanism \cite{mcsherry2007mechanism} is a general tool for designing differentially private selection probabilities. In the local model of DP where each user sanitizes their own data, for count queries over categorical data, randomized response \cite{warner1965randomized} and its variants \cite{kairouz2016discrete,duchi2018minimax} are prevalent mechanisms, the local hash mechanism \cite{wang2017locally} improves communication efficiency by introducing shared randomness via hash functions, and the Subset mechanism \cite{wang2016mutual,wang2019local} provides optimal utility \cite{ye2018optimal}; for numerical queries, several mechanisms \cite{duchi2018minimax,bhowmick2018protection,wang2019collecting,li2020estimating} provide optimal utility in the high privacy regime.

\textbf{Multi-tier DP queries. } The idea of support multi-analyst DP query previously appears in database settings, such as for database queries \cite{koufogiannis2016gradual,zhang2023dprovdb}, dataset perturbation \cite{xiao2009optimal}. Specifically, the \cite{koufogiannis2016gradual} provides a multi-tier realization of the Laplace mechanism,  \cite{zhang2023dprovdb} uses multi-tier Gaussian noise adding, the \cite{xiao2009optimal} proposes re-sampling strategies for multi-tier randomized response. Nevertheless, these base DP mechanisms are known to be suboptimal. For example, in the low-privacy regime, Laplace mechanism exhibits an exponential error gap compared to the two-sided Geometric mechanism \cite{ghosh2012universally}; meanwhile, the binary randomized response \cite{warner1965randomized} and general randomized response \cite{kairouz2016discrete}  perform significantly worse than the Subset mechanism \cite{wang2019local} in the low-privacy and high-privacy regimes, respectively. In multi-analyst scenarios, the target privacy levels $\{\epsilon_i\}_{i\in [m]}$ can span from low regimes (e.g., $\epsilon > 1$) to high regimes (e.g., $\epsilon < 1$), making these approaches inherently incapable of achieving the utility objective. 

A closely-related setting is \emph{accuracy-first} DP query \cite{ligett2017accuracy,whitehouse2022brownian,ghazi2025private}, which aims to minimize privacy losses upon satisfying the analyst's accuracy criteria. It firstly releases a high-privacy result, then trace back to low-privacy result according to the accuracy requirements. The \cite{ligett2017accuracy,whitehouse2022brownian} propose accuracy-first Laplace and Gaussian mechanisms, respectively, based on generating a continuous trace of the multi-tier transformation (e.g., in \cite{koufogiannis2016gradual}) from the lowest-privacy result to higher privacy results. Our proposals naturally support accuracy-first settings with (non-continuous) privacy levels, by calling our multi-tier mechanisms with a fine-grained privacy list. It is interesting to extend our results for two-sided Geometric mechanism in Section \ref{sec:noise} to the continuous accuracy-first setting.

\section{Conclusion}\label{sec:conclusion}
In this paper, we investigated the multi-tier DP query release problem, which is pervasive in data warehouses, mobile services, and data markets. For noise-adding DP mechanisms, we developed a general transformation framework via characteristic functions,  provided concrete instantiations for count queries with two-sided Geometric noises, and demonstrated infeasible for some prevalent mechanisms (e.g., discrete Gaussian). Additionally, for the more challenging mapping-based DP mechanisms (commonly found in local DP), we provided a template-based multi-tier method and achieved optimal error rates for count queries with the Subset mechanism in local DP. Through extensive experiments, we numerically evaluated the effectiveness of our multi-tier DP proposals.



%% file: appendix.tex
\appendix




\section{Proof of Theorem \ref{the:noiseaddingprivacy}}\label{app:noiseaddingprivacy}
if the algorithm fails at any step, it outputs nothing and the failure event is independent of the true query answer $q(X)$ or the dataset $D$, thus the total privacy loss is $0$.

When the algorithm succeed, for any neighboring datasets $X,X'$ and any $S\subseteq [m]$, let  $\mathcal{M}_S$ denote the subset $\{r_i\}_{i\in S}$ of the algorithm's output $\{r_i\}_{i\in [m]}$ with input $q(X)$, and let $\mathcal{M}'_S$ denote the subset $\{r_i\}_{i\in S}$ of the algorithm's output $\{r_i\}_{i\in [m]}$ with input $q(X')$. Let $i^* = \min S$ (the smallest index in $S$, we define a post-processing function $f:\mathcal{Y}\mapsto \mathcal{Y}^{|S|}$ as follows:
\begin{enumerate}
\item setting $r_{i^*}$ as the input;
\item sequentially, for $i'\in [i^*+1, \max S]$, drawing an  residual noise $s_{i'-1}$ according to Equation \ref{eq:tsgresidual}, and set $r_{i'} = r_{i'-1} + s_{i'-1}$;
\item return results $\{r_i\}_{i\in [i^*,\max S]\backslash S}$ that the indices are in $S$.
\end{enumerate}
The residual noises $s_{i'-1}$ are predetermined and do not depend on the dataset. Therefore $f$ is a fixed Markov kernel that does not use any additional information about $X$ or $X'$.

By construction, for any dataset $X$, we have
\[
\mathcal{M}_S(X,L) \overset{\mathrm{d}}{=} f(\mathcal{M}_{\{i^*\}}(X,L)),
\]
because the mechanism generates $r_{i^*}$ first and then produces all later $r_i$ via the same additive residual noises. The same holds for $X'$. Applying the data processing inequality to the post-processing function $f$ yields
\[
D\bigl(\mathcal{M}_S(X,L)\,\bigr\|\,\mathcal{M}_S(X',L)\bigr)
\;\le\; D\bigl(\mathcal{M}_{\{i^*\}}(X,L)\,\bigr\|\,\mathcal{M}_{\{i^*\}}(X',L)\bigr).
\]
Thus the desired inequality holds.

\section{Proof of Lemma \ref{lemma:expansion}}\label{app:expansion}
For any subset $S' \subseteq [d]$ of size $k+1$, the probability that $r' = S'$ can be computed by considering which element was added during expansion. There are exactly $k+1$ possible ways that $S'$ could have been generated. Specifically, for each $y \in S'$, there is exactly one subset $S' \setminus \{y\}$ of size $k$ that could have been expanded to $S'$ by adding $y$. Thus, we have:
\[
\Pr[r' = S' | x] = \sum_{y \in S'} \Pr[r = S' \setminus \{y\} | x] \cdot \frac{1}{d-k}.
\]
Now consider two cases:

\textbf{Case 1 ($x \in S'$): }
There is exactly one element $y = x$ in $S'$ such that $x \in S' \setminus \{y\}$ (when $y \neq x$), and for $y = x$, we have $x \notin S' \setminus \{x\}$. Using the probability distribution of $Subset(x,k,\epsilon)$:
\begin{align*}
\Pr[r'\!=\!S'|x]&= \frac{1}{d\!-\!k} \left[ \Pr[r\!=\!S'\!\setminus\!\{x\}|x]\!+\!\sum_{y\in S', y\neq x}\Pr[r\!=\!S'\!\setminus\!\{y\}|x] \right] \\
&= \frac{1}{d-k} \left[ \frac{1}{\binom{d-1}{k-1}e^\epsilon + \binom{d-1}{k}} + k \cdot \frac{e^\epsilon}{\binom{d-1}{k-1}e^\epsilon + \binom{d-1}{k}} \right] \\
&= \frac{1}{d-k} \cdot \frac{1 + k e^\epsilon}{\binom{d-1}{k-1}e^\epsilon + \binom{d-1}{k}}\\
&=\frac{(1+k e^\epsilon)/(k+1)}{\binom{d-1}{k}e^\epsilon + \binom{d-1}{k+1}}.
\end{align*}

\textbf{Case 2 ($x \notin S'$): }
For all $y \in S'$, we have $x \notin S' \setminus \{y\}$, so:
\begin{align*}
\Pr[r' = S' | x] &= \frac{1}{d-k} \sum_{y \in S'} \Pr[r = S' \setminus \{y\} | x] \\
&= \frac{1}{d-k} \cdot \frac{k+1}{\binom{d-1}{k-1}e^\epsilon + \binom{d-1}{k}}\\
&=\frac{1}{\binom{d-1}{k}e^\epsilon + \binom{d-1}{k+1}}.
\end{align*}

Now, for the target distribution $Subset(x,k+1,\epsilon')$ with $\epsilon' = \ln(\frac{e^{\epsilon}k+1}{k+1})$, we have:
\[
\Pr[S' | x] = \begin{cases}
    \frac{e^{\epsilon'}}{\binom{d-1}{k}e^{\epsilon'} + \binom{d-1}{k+1}} & \text{if } x \in S' \\
    \frac{1}{\binom{d-1}{k}e^{\epsilon'} + \binom{d-1}{k+1}} & \text{if } x \notin S'
\end{cases}
\]
Therefore, $r' \overset{\mathrm{d}}{=} Subset(x,k+1,\ln((1+k e^\epsilon)/(k+1)))$.

\section{Proof of Lemma \ref{lemma:rescale}}\label{app:rescale}
Let $r' = \text{Rescale}(r,\beta)$. For any subset $S \subseteq [d]$ of size $k$, the probability that $r' = S$ can be computed by considering the two cases in the rescale operation:
\[
\Pr[r' = S | x] = \beta \cdot \Pr[r = S | x] + (1-\beta) \cdot \frac{1}{\binom{d}{k}}.
\]
Now consider two cases:

\textbf{Case 1 ($x \in S$): }
Using the probability distribution of $Subset(x,k,\epsilon)$:
\begin{align*}
\Pr[r' = S | x] &= \beta \cdot \frac{e^{\epsilon}}{\binom{d-1}{k-1}e^{\epsilon} + \binom{d-1}{k}} + (1-\beta) \cdot \frac{1}{\binom{d}{k}} \\
&= \frac{\beta e^{\epsilon}}{\binom{d-1}{k-1}e^{\epsilon} + \binom{d-1}{k}} + \frac{1-\beta}{\binom{d}{k}}
\end{align*}

\textbf{Case 2 ($x \notin S$): }
\begin{align*}
\Pr[r' = S | x] &= \beta \cdot \frac{1}{\binom{d-1}{k-1}e^{\epsilon} + \binom{d-1}{k}} + (1-\beta) \cdot \frac{1}{\binom{d}{k}} \\
&= \frac{\beta}{\binom{d-1}{k-1}e^{\epsilon} + \binom{d-1}{k}} + \frac{1-\beta}{\binom{d}{k}}
\end{align*}

The new probability ratio is then:
\begin{align*}
\frac{\frac{\beta e^{\epsilon}}{N} + \frac{1-\beta}{M}}{\frac{\beta}{N} + \frac{1-\beta}{M}}
= \frac{\beta e^{\epsilon} + \frac{N}{M}(1-\beta)}{\beta + \frac{N}{M}(1-\beta)}
\end{align*}
where $N = \binom{d-1}{k-1}e^{\epsilon} + \binom{d-1}{k}$ and $M = \binom{d}{k}$. Using the identity $\frac{N}{M} = \frac{\binom{d-1}{k-1}(e^{\epsilon} + \frac{d-k}{k})}{\binom{d}{k}} = \frac{k}{d}(e^{\epsilon} + \frac{d-k}{k}) = \frac{k e^{\epsilon} + d - k}{d}$, we get the new privacy level $\epsilon'$ as:
\[
e^{\epsilon'} = \frac{d\beta e^{\epsilon} + (1-\beta)(k e^{\epsilon} + d - k)}{d\beta + (1-\beta)(k e^{\epsilon} + d - k)}
\]

\section{Proof of Theorem \ref{the:utilitymatch}}\label{app:utilitymatch}
Let $k^*_i$ denote the optimal subset size for the  $\epsilon_i$, we consider $4$ cases about the privacy budget $\epsilon_i$.

\textbf{Case 1: } In the case of $\epsilon_i> \ln(d-1)$, we have $k_i^*\equiv 1$. It is handled in the step 1 with only rescale operations, then the corresponding result $s$ is distributionally equivalent to $Subset(\epsilon_i,1)$. Therefore, we have $\mathbb{E}[\|Subset^{L}_i-x\|_2^2]= V(\epsilon_i,k^*_i)$.

\textbf{Case 2: } In the case of $\ln(d-1)\geq \epsilon_i\geq \ln(d/\lfloor (d-1)/2\rfloor)$, we have $\lceil d/(e^{\epsilon_i}+1)\rceil\geq k^*_i\geq \lfloor d/(e^{\epsilon_i}+1)\rfloor$. It is handled in the step 2 by the generated templates in Algorithm \ref{alg:templategen}, the corresponding result $s$ has either size $\lfloor d/(e^{\epsilon_i}+1)\rfloor$ (with tight privacy budget $\ln(d/\lfloor d/(e^{\epsilon_i}+1)\rfloor-1)$) or size $\lceil d/(e^{\epsilon_i}+1)\rceil$ (with tight privacy budget $\ln(d/\lceil d/(e^{\epsilon_i}+1)\rceil-1)$). Since the following two inequality holds:
$$V(\ln(d/\lceil d/(e^{\epsilon_i}+1)\rceil-1),\lceil d/(e^{\epsilon_i}+1)\rceil)\geq V(\epsilon_i,k^*_i);$$
$$V(\epsilon_i,k^*_i)\geq V(\ln(d/\lfloor d/(e^{\epsilon_i}+1)\rfloor-1), \lfloor d/(e^{\epsilon_i}+1)\rfloor).$$
Now consider a formula  $\frac{V(\ln(d/(k+1)-1), k+1)}{V(\ln(d/k-1), k)}$, we have:
\begin{alignat*}{2}
&\frac{V(\ln(d/(k+1)-1), k+1)}{V(\ln(d/k-1), k)}\\
=&\frac{-4 (1 + k)^2 + d (3 + 4 k)}{(d - 
   2 (1 + k))^2}\Big/\frac{-4 k^2 + d (-1 + 4 k)}{(d - 2 k)^2 }.
\end{alignat*}
Let $F(k)=\ln(\frac{-4 k^2 + d (-1 + 4 k)}{(d - 2 k)^2})$, for $k\in [1,\lfloor (d-1)/2\rfloor]$, we have $\frac{\partial F}{\partial k}\geq 0$, $\frac{\partial^2 F}{\partial k^2}\leq 0$ when $k\leq \frac{3d-\sqrt{3(d^2-d)}}{6}$, and $\frac{\partial^2 F}{\partial k^2}\geq 0$ when $k\geq \frac{3d-\sqrt{3(d^2-d)}}{6}$. Therefore, when $\lfloor (d-1)/2\rfloor\geq 2$ (i.e., $d\geq 5$), we have:
\begin{alignat*}{2}
&\frac{V(\ln(d/(k+1)-1), k+1)}{V(\ln(d/k-1), k)}\\
\leq &\exp(F(k+1)-F(k))\\
\leq &\max\{\exp(F(2)-F(1)), \exp(F(\lfloor (d-1)/2\rfloor)-F(\lfloor (d-1)/2\rfloor-1))\}\\
\leq &\frac{(7d-16)(d-2)^2}{(3d-4)(d-4)^2}.
\end{alignat*}
Consequently, we have:
\begin{alignat*}{2}
&\mathbb{E}[\|Subset^{L}_i-x\|_2^2]\\
\leq& V(\ln(d/\lceil d/(e^{\epsilon_i}+1)\rceil-1), \lceil d/(e^{\epsilon_i}+1)\rceil)\\
\leq&\frac{V(\ln(d/\lceil d/(e^{\epsilon_i}+1)\rceil-1), \lceil d/(e^{\epsilon_i}+1)\rceil)}{V(\ln(d/\lfloor d/(e^{\epsilon_i}+1)\rfloor-1), \lfloor d/(e^{\epsilon_i}+1)\rfloor)}\cdot V(\epsilon_i,k^*_i)\\
\leq& \frac{(7d-16)(d-2)^2}{(3d-4)(d-4)^2}\cdot V(\epsilon_i,k^*_i)\\
\leq& \frac{(7\cdot 5-16)(5-2)^2}{(3\cdot 5-4)(5-4)^2}\cdot V(\epsilon_i,k^*_i)\\
\leq& \frac{171}{11}\cdot V(\epsilon_i,k^*_i)
\end{alignat*}

\textbf{Case 3: } In the case of $d$ is even and $\ln(d/\lfloor (d-1)/2\rfloor-1)> \epsilon_i\geq \ln((d+2)/\lfloor d/2\rfloor-1)$. Note that this case only applies to when $d$ is even and $d\geq 4$. It is handled at the start of Step 3 (see lines 8-10 in Algorithm \ref{alg:multitiersubset}), and the corresponding result will matches to the resulting $s$ at line 10. Therefore, the following two inequalities hold:
$$V(\ln((d+2)/\lfloor d/2\rfloor-1),\lfloor d/2\rfloor)\geq V(\epsilon_i,k^*_i);$$
$$V(\epsilon_i,k^*_i)\geq V(\ln(d/\lfloor (d-1)/2\rfloor-1),\lfloor (d-1)/2\rfloor).$$
Notice that when $d$ is even and $d\geq 4$ (the subcase $d=2$ is trivial as the optimal subset size is always $1$), we have:
\begin{alignat*}{2}
&\frac{V(\ln((d+2)/\lfloor d/2\rfloor-1),\lfloor d/2\rfloor)}{V(\ln(d/\lfloor (d-1)/2\rfloor-1),\lfloor (d-1)/2\rfloor)}\\
=&\frac{V(\ln(2(d+2)/d-1), d/2)}{V(\ln(2d/(d-2)-1),(d-2)/2)}\\
=&1+\frac{2}{d}+\frac{2(3+d)}{d(d-1)-4}\\
\leq& 1+\frac{2}{4}+\frac{2(3+4)}{4(4-1)-4}\\
\leq& 13/4.
\end{alignat*}
Consequently, we have:
\begin{alignat*}{2}
&\mathbb{E}[\|Subset^{L}_i-x\|_2^2]\\
\leq& V(\ln((d+2)/\lfloor d/2\rfloor-1),\lfloor d/2\rfloor)\\
\leq& \frac{V(\ln((d+2)/\lfloor d/2\rfloor-1),\lfloor d/2\rfloor)}{V(\ln(d/\lfloor (d-1)/2\rfloor-1),\lfloor (d-1)/2\rfloor)}\cdot V(\epsilon_i,k^*_i)\\
\leq& (1+\frac{2}{d}+\frac{2(3+d)}{d(d-1)})\cdot V(\epsilon_i,k^*_i)\\
\leq& \frac{13}{4}\cdot V(\epsilon_i,k^*_i).
\end{alignat*}

\textbf{Case 4: } In the case of $\epsilon_i< \ln((d+2)/\lfloor d/2\rfloor-1)$ (if $d$ is even) or $e^{\epsilon_{i}}+1< d/\lfloor (d-1)/2\rfloor$ (if $d$ is odd), we have $k^*_i=\lfloor d/2\rfloor$ (if $d$ is even) or $k^*_i=\lfloor (d-1)/2\rfloor$ (if $d$ is odd). It is handled in the Step 4, with only rescale operations, then the corresponding result $s$ is distributionally equivalent to $Subset(\epsilon_i,k^*_i)$. Therefore, we have $\mathbb{E}[\|Subset^{L}_i-x\|_2^2]= V(\epsilon_i,k^*_i)$.

In summary, we have the conclusion.

\section{Proof of Theorem \ref{the:utilitymatch2}}\label{app:utilitymatch2}
For any $k,k+1\in [d]$, we have:
\begin{alignat*}{2}
&V(\epsilon,k)-V(\epsilon,k+1)\\
=&\frac{(d-1)^2 \left(d^2-d (2 k+1)-\left(e^{2 \epsilon }-1\right) k (k+1)\right)}{\left(e^{\epsilon }-1\right)^2 k (k+1) (d-k-1) (d-k)},
\end{alignat*}
which implies that the optimal subset size $k^*$ is non-increasing with $\epsilon$ and the equality holds when:
\begin{equation}\label{}
\epsilon_{(k)}=\log\left(\sqrt{\frac{d-k}{k}\cdot \frac{d-k-1}{k+1}}\right)
\end{equation}

Let $k^*_i$ denote the optimal subset size for the  $\epsilon_i$, we now consider $4$ cases about the privacy budget $\epsilon_i$.

\textbf{Case 1: } In the case of $\epsilon_i> \ln(d-1)$, we have $k_i^*\equiv 1$. It is handled in the step 1 with only rescale operations, then the corresponding result $s$ is distributionally equivalent to $Subset(\epsilon_i,1)$. Therefore, we have $\mathbb{E}[\|Subset_1^{L}(x)_i-x\|_2^2]= V(\epsilon_i,k^*_i)$.

\textbf{Case 2: } In the case of $\ln(d-1)\geq \epsilon_i\geq \ln(d/\lfloor (d-1)/2\rfloor)$, we have $\lceil d/(e^{\epsilon_i}+1)\rceil\geq k^*_i\geq \lfloor d/(e^{\epsilon_i}+1)\rfloor$, and there are three subcases: 

(I) In the first subcase that $\lceil d/(e^{\epsilon_i}+1)\rceil=\lfloor d/(e^{\epsilon_i}+1)\rfloor$, both values equal to $k^*$ and the matched result coresponds to the template with budget $\ln(d/k^*-1)$, thus $\mathbb{E}[\|Subset_2^{L}(x)_i-x\|_2^2]= V(\epsilon_i,k^*_i)$.  

(II) In the second subcase that $\lceil d/(e^{\epsilon_i}+1)\rceil>\lfloor d/(e^{\epsilon_i}+1)\rfloor$ and $\ln((d+1)/\lceil d/(e^{\epsilon_i}+1)\rceil-1)\geq \epsilon_i\geq \ln(d/\lceil d/(e^{\epsilon_i}+1)\rceil-1)$, it is handled by the lines 9-13. Since $\epsilon_{(\lfloor d/(e^{\epsilon_i}+1)\rfloor)}\geq \ln((d+1)/\lceil d/(e^{\epsilon_i}+1)\rceil-1)$, we have $V(\epsilon_i, \lceil d/(e^{\epsilon_i}+1)\rceil)\leq V(\epsilon_i, \lfloor d/(e^{\epsilon_i}+1)\rfloor)$, the $k^*=\lceil d/(e^{\epsilon_i}+1)\rceil$ is an optimal subset size for $\epsilon_i$. Consequently, the matched result for $\epsilon_i$ at the lines 9-13 is distributionally equivalent to $Subset(d, \lceil d/(e^{\epsilon_i}+1)\rceil, \epsilon_i)$, and thus $\mathbb{E}[\|Subset_2^{L}(x)_i-x\|_2^2]= V(\epsilon_i,k^*_i)$.

(III) In the third subcase that $\lceil d/(e^{\epsilon_i}+1)\rceil>\lfloor d/(e^{\epsilon_i}+1)\rfloor$ and $\ln(d/\lfloor d/(e^{\epsilon_i}+1)\rfloor-1)\geq \epsilon_i\geq \ln((d-1)/\lfloor d/(e^{\epsilon_i}+1)\rfloor-1)$, it is also handled by the lines 9-13. Since $\ln((d-1)/\lfloor d/(e^{\epsilon_i}+1)\rfloor-1)\geq \epsilon_{(\lfloor d/(e^{\epsilon_i}+1)\rfloor)}$, we have $V(\epsilon_i, \lceil d/(e^{\epsilon_i}+1)\rceil)\geq V(\epsilon_i, \lfloor d/(e^{\epsilon_i}+1)\rfloor)$, the $k^*=\lfloor d/(e^{\epsilon_i}+1)\rfloor$ is an optimal subset size for $\epsilon_i$. Consequently, the matched result for $\epsilon_i$ at the lines 9-13 is distributionally equivalent to $Subset(d, \lfloor d/(e^{\epsilon_i}+1)\rfloor, \epsilon_i)$, and thus $\mathbb{E}[\|Subset_2^{L}(x)_i-x\|_2^2]= V(\epsilon_i,k^*_i)$. 

(IV) In the fourth subcase that $\lceil d/(e^{\epsilon_i}+1)\rceil>\lfloor d/(e^{\epsilon_i}+1)\rfloor$ and $\ln((d-1)/\lfloor d/(e^{\epsilon_i}+1)\rfloor-1) > \epsilon_i>\ln((d+1)/\lceil d/(e^{\epsilon_i}+1)\rceil-1)$. It will be matched to a result with budget $\ln((d+1)/\lceil d/(e^{\epsilon_i}+1)\rceil-1)$ and subset size $\lceil d/(e^{\epsilon_i}+1)\rceil$. Since following two inequalities hold:
$$V(\ln((d+1)/\lceil d/(e^{\epsilon_i}+1)\rceil-1),\lceil d/(e^{\epsilon_i}+1)\rceil)\geq V(\epsilon_i,k^*_i);$$
$$V(\epsilon_i,k^*_i)\geq V(\ln((d-1)/\lfloor d/(e^{\epsilon_i}+1)\rfloor-1), \lfloor d/(e^{\epsilon_i}+1)\rfloor),$$
we have:
\begin{alignat*}{2}
&\frac{\mathbb{E}[\|Subset_2^{L}(x)_i-x\|_2^2]}{ V(\epsilon_i,k^*_i)}\\
\leq&\frac{V(\ln((d+1)/(k+1)-1), k+1)}{V(\ln((d-1)/k-1), k)}\\
=&\frac{(d - k) (2 + d^2 (3 + 4 k) + k (7 + 4 k (2 + k)) - 
   d (5 + k (11 + 8 k)))}{(1 - d + k)^2 (1 - 4 k^2 + d (-1 + 4 k))}.
\end{alignat*}
In the theorem, we use the formula with $k=\lceil d/(e^\epsilon_i+1)\rceil-1$.


\textbf{Case 3: } In the case of $d$ is even and $\ln(d/\lfloor (d-1)/2\rfloor-1)> \epsilon_i\geq \ln((d+2)/\lfloor d/2\rfloor-1)$. Note that this case only applies to when $d$ is even and $d\geq 4$. It is handled at the start of Step 3 (see lines 7-9 in Algorithm \ref{alg:multitiersubset}), and  will match to the resulting $s$ at line 9. Therefore, the following two inequalities hold:
$$V(\ln((d+2)/\lfloor d/2\rfloor-1),\lfloor d/2\rfloor)\geq V(\epsilon_i,k^*_i);$$
$$V(\epsilon_i,k^*_i)\geq V(\ln(d/\lfloor (d-1)/2\rfloor-1),\lfloor (d-1)/2\rfloor).$$
Notice that when $d$ is even and $d\geq 4$ (the subcase $d=2$ is trivial as the optimal subset size is always $1$), we have:
\begin{alignat*}{2}
&\frac{V(\ln((d+2)/\lfloor d/2\rfloor-1),\lfloor d/2\rfloor)}{V(\ln(d/\lfloor (d-1)/2\rfloor-1),\lfloor (d-1)/2\rfloor)}\\
=&\frac{V(\ln(2(d+2)/d-1), d/2)}{V(\ln(2d/(d-2)-1),(d-2)/2)}\\
=&1+\frac{2}{d}+\frac{2(3+d)}{d(d-1)-4}\\
\leq& 1+\frac{2}{4}+\frac{2(3+4)}{4(4-1)-4}\\
\leq& 13/4.
\end{alignat*}
Consequently, we have:
\begin{alignat*}{2}
&\mathbb{E}[\|Subset^{L}_i-x\|_2^2]\\
\leq& V(\ln((d+2)/\lfloor d/2\rfloor-1),\lfloor d/2\rfloor)\\
\leq& \frac{V(\ln((d+2)/\lfloor d/2\rfloor-1),\lfloor d/2\rfloor)}{V(\ln(d/\lfloor (d-1)/2\rfloor-1),\lfloor (d-1)/2\rfloor)}\cdot V(\epsilon_i,k^*_i)\\
\leq& (1+\frac{2}{d}+\frac{2(3+d)}{d(d-1)})\cdot V(\epsilon_i,k^*_i)\\
\leq& \frac{13}{4}\cdot V(\epsilon_i,k^*_i).
\end{alignat*}

\textbf{Case 4: } In the case of $\epsilon_i< \ln((d+2)/\lfloor d/2\rfloor-1)$ (if $d$ is even) or $e^{\epsilon_{i}}+1< d/\lfloor (d-1)/2\rfloor$ (if $d$ is odd), we have $k^*_i=\lfloor d/2\rfloor$ (if $d$ is even) or $k^*_i=\lfloor (d-1)/2\rfloor$ (if $d$ is odd). It is handled in the Step 4, with only rescale operations, then the corresponding result $s$ is distributionally equivalent to $Subset(\epsilon_i,k^*_i)$. Therefore, we have $\mathbb{E}[\|Subset^{L}_i-x\|_2^2]= V(\epsilon_i,k^*_i)$.

In summary, we have the conclusion.

\section{Non-existence of Multi-tier Staircase Mechanism}\label{app:staircase}
In the staircase mechanism \cite{geng2015staircase} for one-dimensional numerical queries with sensitivity $\Delta$, the noise distribution is as follows  :
\begin{equation}
\text{Pr}[\text{SC}(\epsilon)\!=\!y] =\! 
\begin{cases}
a(\gamma), & 0 \le y < \gamma \Delta,\\[4pt]
a(\gamma) e^{-\epsilon}, & \gamma \Delta \le y < \Delta,\\[4pt]
e^{-k\epsilon} \text{Pr}[\text{SC}(\epsilon)\!=\!y\!-\!k\Delta], & y\!\in\! [k\Delta, (k\!+\!1)\Delta), k\!\in\!\mathbb{Z}_{\ge 0},\\[4pt]
\text{Pr}[\text{SC}(\epsilon)=-y], & y < 0,
\end{cases}
\end{equation}
where the normalization factor \begin{equation}
a(\gamma) = \frac{1 - e^{-\epsilon}}
{2\Delta \left( \gamma + e^{-\epsilon}(1-\gamma) \right)},
\end{equation}
 and the hyperparameter $\gamma=1/(e^{\epsilon/2}+1)$. Its characteristic function is:
\[
\Phi_{\epsilon}(t) := \frac{e^{-{3 \epsilon}{2}} \left(e^{\epsilon}-1\right)^2 \left(e^{\epsilon} \sin \left(\frac{\Delta t}{e^{\epsilon/2}+1}\right)+\sin \left(\Delta \left(1-\frac{1}{e^{\epsilon/2}+1}\right) t\right)\right)}{2 \Delta t (\cosh (\epsilon)-\cos (\Delta t))},
\]
and the residual function is $\Phi_{\epsilon_{i+1}}(t) / \Phi_{\epsilon_i}(t)$.
Recall that any valid characteristic function $R: \mathbb{R} \to \mathbb{C}$ must be \emph{positive definite} \cite{bochner2005harmonic}: for every $n \in \mathbb{N}$ and every choice of points $t_1, \dots, t_C \in \mathbb{R}$, the associated \emph{Bochner matrix}
\[
M := \bigl[ M_{a,b} \bigr]_{a,b \in [C]}, \qquad
M_{a,b} := R(t_a - t_b),
\]
must be Hermitian positive semi-definite. Equivalently, all eigenvalues of $M$ must be non-negative. We construct a counterexample for the residual function $$R(t) = \Phi_{\epsilon_{i+1}}(t) / \Phi_{\epsilon_i}(t)$$ with $\epsilon_{i} = 2.8$ and $\epsilon_{i+1} = 1$. When $t$ takes values in $\{5\pi \cdot (k-1)\}_{k \in [2]}$, the Bochner matrix $M \in \mathbb{C}^{2 \times 2}$ with entries $M_{a,b} = R(t_a - t_b)$ are as follow:
\[
\begin{bmatrix}
1 & -1.43887 \\
-1.43887 & 1 \\
\end{bmatrix}
\]
It has a smallest eigenvalue of approximately $-0.43887 < 0$.
Therefore, $M$ is not positive semi-definite, the $\Phi_{\epsilon_{i+1}}(t) / \Phi_{\epsilon_i}(t)$ is not a valid characteristic function, and no additive noise can transform from $\text{SC}(2.8))$ to $\text{SC}(1.0)$.

\revise{
\section{Multi-tier Local Hash Mechanism}\label{app:multitierLH}
In this section, we extend the template-based framework to the Local Hash (LH) mechanism \cite{wang2017locally}. In LH, the domain $[d]$ is mapped to a smaller domain $[g]$ via a hash function, where the optimal $g$ depends on the privacy budget $\epsilon$ (e.g., $g^*=\lfloor e^\epsilon+1\rfloor$). To facilitate multi-tier transitions, we restrict the hash domain size to powers of two, i.e., $g \in \{2^k \mid k \in \mathbb{Z}^+\}$.

\textbf{Hierarchical Hash Functions.} Let $K = \lfloor \log_2(e^{\epsilon_1} + 1) \rfloor$ be the initial domain exponent for the highest privacy budget $\epsilon_1$. We define a sequence of hash functions $\{H_k: [d] \to [2^k]\}_{k=1}^K$. Starting from a base universal hash function $H_K$, the subsequent functions are defined as:
\begin{equation}
    H_{k-1}(x) = \lceil H_k(x) / 2 \rceil, \quad \text{for } k = K, K-1, \dots, 2.
\end{equation}
This construction ensures that if $H_k(x) = j$, then its projection in the smaller domain $[2^{k-1}]$ is deterministically $H_{k-1}(x)$.

\textbf{Primitive Operators:} Pertained to the LH mechanism, we define two operators for adapting the hash domain size and privacy levels:
\begin{itemize}[leftmargin=1em]
    \item \emph{Merge Operator}: Given a result $r \in [2^k]$ ($k>1$), $\text{Merge}(r) := \lceil r/2 \rceil \in [2^{k-1}]$. This operator effectively merges two adjacent hash buckets.
    \item \emph{Rescale Operator}: Given $r \in [2^k]$ with budget $\epsilon$, the operator $\text{Rescale}(r, \beta)$ outputs $r$ with probability $\beta$, and with probability $1-\beta$ samples a value uniformly from $[2^k]$.
\end{itemize}

\begin{lemma}[Properties of Merge Operation]\label{lemma:mergeLH}
Let $r \in [2^k]$ be the output of a LH mechanism with budget $\epsilon$ using hash function $H_k$ (i.e., $r\overset{d}{=}LH(x,2^k,H_k,\epsilon)$). The merged result $r' = \text{Merge}(r)$ is distributionally equivalent to an LH output over domain $[2^{k-1}]$ with budget $\epsilon' = \ln((e^\epsilon + 1)/2)$. That is, 
$$\text{Merge}(r)\overset{d}{=}LH(x, 2^{k-1}, H_{k-1}, \ln((e^\epsilon + 1)/2)).$$
\end{lemma}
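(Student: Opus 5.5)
Fix the hash function $H_k$, which in LH is public and shared. Conditioned on it, $LH(x,2^k,H_k,\epsilon)$ is the GRR-type kernel over $[2^k]$:
\[
\Pr[r=j\mid x]=\begin{cases}
\frac{e^\epsilon}{e^\epsilon+2^k-1}, & j=H_k(x);\\
\frac{1}{e^\epsilon+2^k-1}, & j\neq H_k(x).
\end{cases}
\]
I would compute the law of $r'=\lceil r/2\rceil$ directly and match it to the LH kernel over $[2^{k-1}]$ with hash $H_{k-1}$. The argument parallels the proof of Lemma~\ref{lemma:expansion}: the probability of each output of the post-processed variable is a sum over its preimages.

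The first step is the preimage structure. For $j'\in[2^{k-1}]$, the preimage of $j'$ under the Merge operator is $\{2j'-1,2j'\}$. By the hierarchical construction $H_{k-1}(x)=\lceil H_k(x)/2\rceil$, the bucket $j'=H_{k-1}(x)$ is exactly the one whose preimage contains $H_k(x)$. Every other bucket's preimage contains no element equal to $H_k(x)$.

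The second step splits into two cases. If $j'=H_{k-1}(x)$, then
\[
\Pr[r'=j'\mid x]=\frac{e^\epsilon+1}{e^\epsilon+2^k-1}.
\]
If $j'\neq H_{k-1}(x)$, then
\[
\Pr[r'=j'\mid x]=\frac{2}{e^\epsilon+2^k-1}.
\]

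The third step is normalization. Set $e^{\epsilon'}=(e^\epsilon+1)/2$. Dividing numerator and denominator by $2$ gives the denominator $(e^\epsilon+1)/2+2^{k-1}-1=e^{\epsilon'}+2^{k-1}-1$. The two cases then read
\[
\frac{e^{\epsilon'}}{e^{\epsilon'}+2^{k-1}-1}\quad\text{and}\quad\frac{1}{e^{\epsilon'}+2^{k-1}-1},
\]
which is exactly the LH kernel with hash $H_{k-1}$, domain $[2^{k-1}]$ and budget $\epsilon'$. As a sanity check, these probabilities sum to $1$.

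The main obstacle is making the role of the hash randomness precise. The equivalence should hold conditionally on each realization of the hash family, with $H_{k-1}$ determined by $H_k$. Consequently it also holds jointly with the shared randomness, since $H_{k-1}$ is a deterministic function of $H_k$. The claim that the merged output is a fresh LH output is about this conditional kernel. Beyond that, the argument is a direct calculation.
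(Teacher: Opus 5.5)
Your proof is correct and follows the same route as the paper: sum the GRR probabilities over the two-element preimage $\{2j'-1,2j'\}$ of each merged bucket. This gives $(e^\epsilon+1)/(e^\epsilon+2^k-1)$ for the bucket containing $H_k(x)$ and $2/(e^\epsilon+2^k-1)$ for the others. The paper stops at the ratio $(e^\epsilon+1)/2$; your check that the denominator becomes $e^{\epsilon'}+2^{k-1}-1$, and your remark about conditioning on the shared hash, make the match with the LH kernel explicit.
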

\begin{proof}
In the original domain $[2^k]$, let $p = \frac{e^\epsilon}{e^\epsilon + 2^k - 1}$ and $q = \frac{1}{e^\epsilon + 2^k - 1}$. 
For $r' \in [2^{k-1}]$, the probability of hitting the correct hash value is:
$\Pr[r' = H_{k-1}(x)] = \Pr[r \in \{2H_{k-1}(x)-1, 2H_{k-1}(x)\}] = p + q = \frac{e^\epsilon + 1}{e^\epsilon + 2^k - 1}$.
The probability of hitting any incorrect value $z \neq H_{k-1}(x)$ is $2q = \frac{2}{e^\epsilon + 2^k - 1}$.
The new privacy budget $\epsilon'$ satisfies $e^{\epsilon'} = \frac{p+q}{2q} = \frac{e^\epsilon + 1}{2}$. Thus, $\epsilon' = \ln(\frac{e^\epsilon + 1}{2})$.
\end{proof}

\begin{lemma}[Properties of Rescale Operation]\label{lemma:rescaleLH}
Let $r \in [2^k]$ be the output of a LH mechanism with budget $\epsilon$ using hash function $H_k$ (i.e., $r\overset{d}{=}LH(x,2^k,H_k,\epsilon)$). Then $\text{Rescale}(r, \beta)$ satisfies $\epsilon'$-LDP with:
\[
\text{Rescale}(r,\beta) \overset{d}{=} LH\left(x, 2^k, H_k, \ln\left(\frac{2^k\beta e^\epsilon + (1-\beta)(e^\epsilon + 2^k - 1)}{2^k\beta + (1-\beta)(e^\epsilon + 2^k - 1)}\right)\right).
\]
\end{lemma}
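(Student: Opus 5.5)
We compute the output distribution of $\text{Rescale}(r,\beta)$ directly. We mirror the proof of Lemma \ref{lemma:rescale} in Appendix \ref{app:rescale}, with the subset output replaced by a hash bucket.

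Throughout, fix the hash function $H_k$ and write $g=2^k$. By the definition of the LH mechanism,
\[
\Pr[r=H_k(x)]=p:=\frac{e^{\epsilon}}{e^{\epsilon}+g-1},\qquad \Pr[r=z]=q:=\frac{1}{e^{\epsilon}+g-1}\quad (z\neq H_k(x)).
\]
The rescale step keeps $r$ with probability $\beta$ and otherwise outputs a uniform element of $[g]$. It is a Markov kernel that does not depend on $x$, so for every $z\in[g]$,
\[
\Pr[r'=z\mid x]=\beta\Pr[r=z\mid x]+\frac{1-\beta}{g}.
\]

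We then split into two cases, exactly as in the Subset proof.

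\textbf{Case $z=H_k(x)$.} Here $\Pr[r'=z\mid x]=\beta p+(1-\beta)/g$.

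\textbf{Case $z\neq H_k(x)$.} Here $\Pr[r'=z\mid x]=\beta q+(1-\beta)/g$. This value is the same for all $g-1$ incorrect buckets.

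So the output still has the LH form: one probability on the correct bucket and a common probability on all other buckets. It is therefore distributed as $LH(x,g,H_k,\epsilon')$, where $e^{\epsilon'}$ is the ratio of the two probabilities. The normalization of the new distribution is automatic, since the two-level structure is preserved and the probabilities sum to one.

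It remains to compute that ratio. Multiply numerator and denominator by $g(e^{\epsilon}+g-1)$. Note that $g(e^{\epsilon}+g-1)\cdot\frac1g=e^{\epsilon}+g-1$, which plays the role of the factor $N/M=(ke^{\epsilon}+d-k)/d$ in the Subset proof. This gives
\[
e^{\epsilon'}=\frac{g\beta e^{\epsilon}+(1-\beta)(e^{\epsilon}+g-1)}{g\beta+(1-\beta)(e^{\epsilon}+g-1)},
\]
which is the claimed formula with $g=2^k$.

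For $\beta\in[0,1]$, the numerator is at least the denominator, so $\epsilon'\in[0,\epsilon]$. Since the output is exactly an LH output with budget $\epsilon'$, $\epsilon'$-LDP follows.

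The only delicate point is the algebraic simplification of the ratio; there is no real obstacle beyond it.
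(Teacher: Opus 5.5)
Your proof is correct and follows essentially the paper's argument. The paper gives no separate proof of this LH lemma, but it is the same computation as Appendix~\ref{app:rescale}: write the mixture probabilities $\beta\Pr[r=z\mid x]+(1-\beta)/2^k$ for the correct and incorrect buckets, then clear the normalizers $e^{\epsilon}+2^k-1$ and $2^k$ in the ratio, which you do correctly for hash buckets. (A small aside: ``numerator $\geq$ denominator'' only gives $\epsilon'\geq 0$; the bound $\epsilon'\leq\epsilon$ comes from $e^{\epsilon}\cdot(\text{denominator})-(\text{numerator})=(1-\beta)(e^{\epsilon}+2^k-1)(e^{\epsilon}-1)\geq 0$, though the lemma does not need it.)
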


\textbf{Multi-ter Implementation.} Build upon previous results, we present the the Multi-tier Local Hash mechanism in Algorithm \ref{alg:multitierLH}, which generates middle-regime templates at reference levels where $e^{\epsilon} = 2^k - 1$ for $k\in [2, K]$. 

\begin{algorithm}[ht]
\caption{Multi-tier Local Hash Mechanism $LH^L(x)$}
\label{alg:multitierLH}
\KwIn{Privacy list $L=\{\epsilon_1,\dots,\epsilon_m\}$; true value $x \in [d]$}
\KwOut{Query results $\{r_i\}_{i \in [m]}$}

\tcp{\color{gray}1. Handle highest budget}

$K \leftarrow \lfloor \log_2(e^{\epsilon_1} + 1) \rfloor$\;

$s \leftarrow \text{LH}(x, 2^K, H_K, \epsilon_1)$ 

$\rho \leftarrow e^{\epsilon_1}$\;

$R \leftarrow \{(\epsilon_1, s, K)\}$\;

$k\leftarrow K$

\tcp{\color{gray}2. Generate middle-budget templates}

\While{$k > 1$}{
    $\rho' \leftarrow 2^k - 1$\;
    \If{$\rho > \rho'$}{
        $s \leftarrow \rescale(s, \frac{(\rho'-1)(\rho + 2^k - 1)}{d(\rho - \rho')})$\; 
        
        $\rho \leftarrow \rho'$ 

        append $(\ln \rho, s, k)$ to $R$\;
    }
    $s \leftarrow \Merge(s)$; \ \ $\rho \leftarrow (\rho + 1)/2$;

    $k \leftarrow k - 1$\;
    
}

\tcp{\color{gray}3. Handle low budget regime (k=1)}


    
append $(\ln \rho, s, k)$ to $R$\;

\For{$\epsilon_i\in L$ where $e^{\epsilon_i}< \rho$}{
        $\rho' \leftarrow e^{\epsilon_i}$

        $s \leftarrow \rescale(s, \frac{(\rho'-1)(\rho + 2^k - 1)}{d(\rho - \rho')})$\; 
        
        $\rho \leftarrow \rho'$ 

        append $(\ln \rho, s, k)$ to $R$\;
        
    }

\tcp{\color{gray}4. Template matching}

$\{r_1,\dots,r_m\} \leftarrow \text{TemplateMatch}(R, L)$\;
\Return{$\{r_1,\dots,r_m\}$}
\end{algorithm}

\textbf{Theoretical Analysis.}
For each middle-regime template in the hierarchy $k$ ($k\in [2,K]$), the hash domain size $2^k$ is optimal for its privacy level $\ln(2^k-1)$. The multiplicative error factor is upper bounded by the maximum ratio of variances between two consecutive templates (if we ignore the $\epsilon_1$).
}


\section{Discussions}\label{app:dis}

\textbf{Online privacy level requests. } In some settings, the analysts may not arrive simultaneously, and query systems might have to handle online queries with diverse privacy levels. There is a simple strategy for solving this issue by pre-computing the results using a fine-grained privacy list $L$. We generate a privacy list $L$ in advance that covers the range of upcoming requests and has fine-grained gaps (i.e., $\epsilon_i-\epsilon_{i+1}$ is sufficiently small). Then, we run our multi-tier algorithm on $L$ to obtain $\{r_i\}_{i\in L}$. For an incoming request with budget $\epsilon$, we find a result with a close but lower budget from $\{r_i\}_{i\in L}$. This strategy works for both multi-tier noise-adding mechanisms and the multi-tier Subset mechanism.

\textbf{Further template improvements. } When the privacy list $L$ is given and fixed, there is still some room for utility improvement in the template-based multi-tier approach. For example, in the multi-tier Subset mechanism, when the largest budget $\epsilon_1$ satisfies $\log((d+1)/2-1)<\epsilon_1< \log((d-1)/1-1)$, the $\textit{multi-tier Subset}^L_2$ cannot exactly match the ideal single-tier Subset mechanism, since there is still a 'hole' between $k=1$ and $k=2$. A simple strategy would be to skip the template $k=1$ with $\epsilon'_1=\log(d/1-1)$ and directly generate the result for $\epsilon$ to match the ideal one exactly. More generally, when the optimal subset sizes of the target privacy list are sparse in $[1,\lfloor d/2\rfloor]$, it is possible to skip some template $k$ without using the rescale operator, so as to maintain a higher budget (i.e., $\rho$ in Algorithm \ref{alg:multitiersubset}) for the next target privacy level. In some cases, this could improve the overall utility. However, this heuristic could also harm the utility for the next target level in other cases: in the previous example, if the next target level is $\epsilon_2=\log(d/2-1)$, transforming from the result for $\epsilon_1$ will not be able to achieve an exact optimal match for $\epsilon_2$ (if $\epsilon_1< \log((d-2)/1-1)$), whereas $\textit{multi-tier Subset}^L_2$ is capable of doing so. We leave such improvements for future research.

\textbf{Supporting more queries. } This work mainly studies the most fundamental multi-tier count queries in the curator model and the local model of DP. Considering diverse types of queries in multi-analyst databases, mobile services, and data markets, extending to other query types (e.g., numerical queries \cite{geng2015staircase}, selection queries \cite{mcsherry2007mechanism}) is of great interest, though challenging: (i) the staircase mechanism \cite{geng2015staircase} for numerical summation queries has optimal error rates but exhibits a staircase-style probability design with sharp transitions between stairs; (ii) numerical queries in LDP under the low privacy regime remain largely unexplored both from a theoretical perspective (e.g., optimal error rates) and in terms of optimal base mechanisms; (iii) selection queries with the exponential mechanism are considerably harder in the multi-tier setting, as the output space grows exponentially with the number of tiers. We leave these extensions for future work.